\newtheorem{theorem}{Theorem}
\theoremstyle{plain}
\newtheorem{claim}{Claim}
\newtheorem{definition}{Definition}
\newtheorem{lemma}{Lemma}
\newtheorem{remark}{Remark}
\numberwithin{equation}{section}
\begin{document}
\title[]{Symbols of non-archimedean Elliptic Pseudo-differential Operators, Feller Semigroups, Markov Transition Function and Negative Definite Functions}
\author{Ismael Guti\'{e}rrez Garc\'{\i}a}
\address{Universidad del Norte \\Department of Mathematic and Statistic}
\email{isgutier@uninorte.edu.co}
\author{Anselmo Torresblanca-Badillo}
\address{Universidad del Norte \\Department of Mathematic and Statistic}
\email{atorresblanca@uninorte.edu.co}
\subjclass{}
\keywords{Pseudo-differential operators, Feller semigroups, Markov
transition function, convolution semigroup, Negative definite function,
non-archimedean analysis. }

\begin{abstract}
In this article we prove that the heat kernel attached to the  non-archimedean elliptic pseudodifferential operators determine a Feller semigroup and a uniformly stochastically continuous $C_{0}-$transition function of some strong Markov processes $\mathfrak{X}$ with state space $\mathbb{Q}_{p}^{n}.$ We explicitly write the Feller semigroup and the Markov transition function associated with the heat kernel. Also, we show that the symbols of these pseudo-differential operators are a negative definite function and moreover, that this symbols can be represented as a combination of a constant $c\geq 0,$ a continuous homomorphism $l: \mathbb{Q}_{p}^{n}\rightarrow \mathbb{R}$ and a non-negative, continuous quadratic form $q: \mathbb{Q}_{p}^{n}\rightarrow \mathbb{R}.$
\end{abstract}

\maketitle

\section{Introduction}

The applications of $p$-adic analysis in mathematical physics has received much attention in the last decades due to the interest of studying $p-$adic pseudodifferential equations associated with certain physical models, see e.g. \cite{Albeverio 2006}, \cite{Chuong-Co-2008}, \cite{Khrennikov-1992}-\cite{R-Zu}, \cite{To-Z}-\cite{Zu-lib1}.

Pseudo-differential operators that have a closed extension that generates a Feller semigroup constitute a classical area of research in the archimedean setting see, e.g., \cite{Hoh-Libro}-\cite{Jacob-vol-3}. This fact aroused great interest in the non-archimedean sense of finding pseudo-differential operators that had a closed extension that would generate Feller's semigroups. The first and most recent work obtained about in the non-archimedean sense as a result of this interest is the paper \cite{To-Z-2}, in which a huge class of pseudodifferential operators (with negative definite symbols) is introduced.

It should be noted that all the above mentioned Feller semigroups are
implicitly expressed. 

In this article we consider the elliptic pseudo-differential operators $f(%
\mathcal{\partial },\beta )$ introduced by Z\'{u}\~{n}iga-Galindo, see \cite{Zu-2003},\cite{Zu-2004}, which has the form
\begin{equation}
(f(\mathcal{\partial },\beta )\varphi
)(x):=\mathcal{F}_{\xi\rightarrow x }^{-1}(|f(\xi )|_{p}^{\beta
}\mathcal{F}_{x\rightarrow \xi }\varphi ),\text{ }\varphi \in \mathcal{D}(\mathcal{\mathbf{\mathbb{Q}}}_{p}^{n})\text{, }\beta >0,  \label{operator}
\end{equation}%
where $f(\xi )\in \mathcal{\mathbf{\mathbb{Z}}}_{p}^{n}[\xi _{1},\ldots ,\xi _{n}]$ is an elliptic polynomial of degree $d
$ satisfying $f(\xi )=0$ if and only if $\xi =0.$

It is important to keep in mind that the symbol $|f(\xi )|_{p}^{\beta },$ $%
\beta >0,$ of these elliptic pseudo-differential operators are not
necessarily radial functions on $\mathcal{\mathbf{\mathbb{Q}}}_{p}^{n}$.

We study the following Cauchy problem
\begin{equation*}
\left\{
\begin{array}{ll}
\frac{\partial u}{\partial t}(x,t)=-(f(\mathcal{\partial },\beta )u)(x,t),%
\text{ \ } & t\in \lbrack 0,\infty ),\text{ \ }x\in \mathbb{Q}_{p}^{n} \\
&  \\
u(x,0)=u_{0}(x)\in \mathcal{D}(\mathcal{\mathbf{\mathbb{Q}}}_{p}^{n})\text{,} &
\end{array}%
\right.
\end{equation*}
naturally associated with these operators.

From the mathematical point of view, in the study of heat conduction and
diffusion, the heat kernel
\begin{equation*}
Z(x,t):=Z_{t}(x):=\mathcal{F}_{\xi\rightarrow x}^{-1}(e^{-t|f(\xi
)|_{p}^{\beta }})=\int\nolimits_{\mathbf{\mathbb{Q}}_{p}^{n}}\chi _{p}(-x,\xi )e^{-t|f(\xi )|_{p}^{\beta }}d^{n}\xi ,\text{ }%
t>0,\text{ }x\in \mathbf{\mathbb{Q}}_{p}^{n},
\end{equation*}%
attached to operator $f(\mathcal{\partial },\beta ),$ is the fundamental
solution to the heat equation with appropriate initial conditions. These
equations deal with problems that have spatial and temporal structure, that
is, their solution depends of a position $x$ and a time $t.$

Physically, the heat kernel represents the evolution of temperature in a
region whose boundary is held at a particular temperature (typically zero),
such that an initial unit of heat energy is placed at a point at time $t=0,$
in other words, the heat kernel $Z(x,t)$ expresses a thermal distribution of
position $x$ at time $t$. Inspired by this fact, unlike the previously
mentioned works, in this article we will obtain explicitly a Feller
semigroups $\{T_{t}\}_{t\geq 0}$ on the space of Banach $C_{0}(\mathbb{Q}_{p}^{n})$ (space of continuous functions vanishing at infinity) generated
from the heat kernel, see Theorem \ref{Feller_semigroups}. Moreover, we also
obtain in an explicit way a uniformly stochastically continuous $C_{0}-$%
transition function on $\mathbf{\mathbb{Q}}_{p}^{n},$ $p_{t}(x,\cdot ),$ that satisfies the condition: for each $s>0$
and each compact subset $E\subset\mathbb{Q}_{p}^{n}$,%
\begin{equation*}
\lim_{x\rightarrow \infty }\sup_{0\leq t\leq s}p_{t}(x,E)=0.
\end{equation*}%
Moreover, $p_{t}(x,\cdot )$ it is the transition function of some strong
Markov processes $\mathfrak{X}$ with state space $\mathbb{Q}_{p}^{n}$ and transition function $p_{t}(x,\cdot )$ whose paths are right
continuous and have no discontinuities other than jumps, see Theorem \ref{Transition}. \ \ \

It is relevant to mention that the type of Feller semigroups and transition functions treated in this article are not obtained from a closed extension of the elliptic pseudo-differential operator $f(\mathcal{\partial },\beta )$, as it was developed in the works mentioned above, on the contrary, our Feller semigroups and transition functions are obtained from the heat kernel
attached to operator $f(\mathcal{\partial },\beta ),$ which is very
important, because we are interested in the Markovian behavior of our process, therefore the interest to know explicitly the Feller semigroup and the transition function of our strong Markov process.

There are several significant differences between this article and the works mentioned above.

On the other hand, motivated by the article \cite{To-Z-2}, we are interested
in knowing if the symbol of the pseudo-differential operator $f(\mathcal{\partial },\beta )$ is a negative definite function and if there is also a
representation for said symbol.

This article is organized as follows. In Section \ref{Fourier Analysis}, we will collect some basic results on the $p$-adic analysis and fix the
notation that we will use through the article. In Section \ref%
{preliminary_results}, we give several technical results of elliptic
polynomials of degree $d$ and on the heat kernel $(Z_{t}(x))$ of the Cauchy problem associated with the elliptic pseudo-differential operator $f(\mathcal{\partial },\beta )$ of degree $d.$ In Section \ref{Feller and Transition}, we initially prove that there is a Feller semigroup $\{T_{t}\}_{t\geq 0}$ associated with the heat kernel $(Z_{t}(x))$, see Theorem \ref{Feller_semigroups}$.$ This semigroup of Feller is obtained explicitly and moreover is conservative, i.e. $T_{t}1=1,$ for all $t>0,$ see Remark \ref{Conservative}. Later in this section, we prove that there is an uniformly stochastically continuous $C_{0}-$transition function $p_{t}(x,\cdot )$ on $\mathbb{Q}_{p}^{n},$ written explicitly for $E\in \mathcal{B}(\mathbb{Q}_{p}^{n})$ as
\begin{equation*}
p_{t}(x,E)=\left\{
\begin{array}{ll}
Z_{t}(x)\ast 1_{E}(x)\text{,} & \text{\ for }t>0\text{, }x\in \mathbf{\mathbb{Q}}_{p}^{n} \\
&  \\
1_{E}(x), & \text{for }t=0\text{, }x\in \mathbf{\mathbb{Q}}_{p}^{n}.%
\end{array}%
\right.
\end{equation*}

Moreover, this transition function satisfies that for each $s>0$ and each
compact subset $E\subset\mathbb{Q}_{p}^{n}$,%
\begin{equation*}
\lim_{x\rightarrow \infty }\sup_{0\leq t\leq s}p_{t}(x,E)=0.
\end{equation*}

The Feller semigroup $\{T_{t}\}_{t\geq 0}$ and the transition function $p_{t}(x,\cdot )$ are connected by the equation
\begin{equation*}
T_{t}f(x):=\int_{\mathbf{\mathbb{Q}}_{p}^{n}}p_{t}(x,d^{n}y)f(y).
\end{equation*}
We also have that $p_{t}(x,\cdot )$ is the transition function of some
strong Markov processes $\mathfrak{X}$ with state space $\mathbb{Q}_{p}^{n}$ whose paths are right continuous and have no discontinuities other
than jumps, see Theorem \ref{Transition}. In Section \ref{Negative_definite_function}, a first important result obtained is that the symbol $|f|_{p}^{\beta },$ $\beta >0,$ of the pseudo-differential operator $f(\mathcal{\partial },\beta )$ is a negative definite function, see Theorem \ref{negative_function}. This result complements the family of non-archimedean pseudo-diferential operators treated at \cite{To-Z-2} whose symbol is a negative definite function. Subsequently, making use of the theory of convolution semigroup of local type, we can show that the symbol $|f|_{p}^{\beta },$ $\beta >0,$ can be represented in the form
\begin{equation*}
|f(\xi )|_{p}^{\beta }=c+il(\xi )+q(\xi )\text{, }\xi \in\mathbb{Q}_{p}^{n},
\end{equation*}%
where $c\geq 0,$ $l: \mathbb{Q}_{p}^{n}\rightarrow \mathbb{R}$ is a continuous homomorphism and $q: \mathbb{Q}_{p}^{n}\rightarrow \mathbb{R}$ is a non-negative, continuous quadratic form. This is because we will prove the equivalence of the following statements:

\begin{enumerate}[(i)]
\item for all open neighbourhoods $W$ of $0$ we have
\begin{equation*}
\lim_{t\rightarrow 0^{+}}\frac{1}{t}Z_{t}(\complement W)=0.
\end{equation*}%

\item $Z=0$ $(Z$ is the L\'{e}vy measure for the convolution semigroup $%
(Z_{t})_{t>0}$ on $\mathbb{Q}_{p}^{n}).$

\item $|f(\xi )|_{p}^{\beta }=c+il(\xi )+q(\xi )$ for $\xi \in\mathbb{Q}_{p}^{n}$, see Theorem \ref{equivalencias}.
\end{enumerate}

\section{\label{Fourier Analysis}Fourier Analysis on $\mathbb{Q}_{p}^{n}$: Essential Ideas}

\subsection{The field of $p$-adic numbers}

Along this article $p$ will denote a prime number. The field of $p-$adic numbers $\mathbb{Q}_{p}$ is defined as the completion of the field of rational numbers $\mathbb{Q}$ with respect to the $p-$adic norm $|\cdot |_{p}$, which is defined as
\begin{equation*}
\left\vert x\right\vert _{p}=\left\{
\begin{array}{lll}
0 & \text{if} & x=0 \\
&  &  \\
p^{-\gamma } & \text{if} & x=p^{\gamma }\frac{a}{b}\text{,}%
\end{array}%
\right.
\end{equation*}%
where $a$ and $b$ are integers coprime with $p$. The integer $\gamma :=ord(x) $, with $ord(0):=+\infty $, is called the\textit{\ }$p-$\textit{adic order of} $x$.

Any $p-$adic number $x\neq 0$ has a unique expansion of the form
\begin{equation*}
x=p^{ord(x)}\sum_{j=0}^{\infty }x_{j}p^{j},
\end{equation*}%
where $x_{j}\in \{0,1,2,\dots ,p-1\}$ and $x_{0}\neq 0$. By using this
expansion, we define \textit{the fractional part of }$x\in \mathbb{Q}_{p}$,
denoted $\{x\}_{p}$, as the rational number
\begin{equation*}
\left\{ x\right\} _{p}=\left\{
\begin{array}{lll}
0 & \text{if} & x=0\text{ or }ord(x)\geq 0 \\
&  &  \\
p^{ord(x)}\sum_{j=0}^{-ord_{p}(x)-1}x_{j}p^{j} & \text{if} & ord(x)<0.%
\end{array}%
\right.
\end{equation*}%
In addition, any non-zero $p-$adic number can be represented uniquely as $%
x=p^{ord(x)}ac\left( x\right) $ where $ac\left( x\right) =\sum_{j=0}^{\infty
}x_{j}p^{j}$, $x_{0}\neq 0$, is called the \textit{angular component} of $x$. Notice that $\left\vert ac\left( x\right) \right\vert _{p}=1$.

We extend the $p-$adic norm to $\mathbb{Q}_{p}^{n}$ by taking
\begin{equation*}
||x||_{p}:=\max_{1\leq i\leq n}|x_{i}|_{p},\text{ for }x=(x_{1},\dots
,x_{n})\in \mathbb{Q}_{p}^{n}.
\end{equation*}
We define $ord(x)=\min_{1\leq i\leq n}\{ord(x_{i})\}$, then $%
||x||_{p}=p^{-ord(x)}$.\ The metric space $\left(\mathbb{Q}_{p}^{n},||\cdot ||_{p}\right) $ is a complete ultrametric space, which is a totally disconnected topological space. For $r\in \mathbb{Z}$, denote by $B_{r}^{n}(a)=\{x\in 
\mathbb{Q}_{p}^{n};||x-a||_{p}\leq p^{r}\}$ \textit{the ball of radius }$p^{r}$
\textit{with center at} $a=(a_{1},\dots ,a_{n})\in\mathbb{Q}_{p}^{n}$, and take $B_{r}^{n}(0):=B_{r}^{n}$. Note that $B_{r}^{n}(a)=B_{r}(a_{1})\times \cdots \times B_{r}(a_{n})$, where $B_{r}(a_{i}):=\{x\in \mathbb{Q}_{p};|x_{i}-a_{i}|_{p}\leq p^{r}\}$ is the one-dimensional ball of radius $p^{r}$ with center at $a_{i}\in  \mathbb{Q}_{p}$. The ball $B_{0}^{n}$ equals the product of $n$ copies of $B_{0}= 
\mathbb{Z}_{p}$, \textit{the ring of }$p-$\textit{adic integers of }$\mathbb{Q}_{p}$. We also denote by $S_{r}^{n}(a)=\{x\in \mathbb{Q}_{p}^{n};||x-a||_{p}=p^{r}\}$ \textit{the sphere of radius }$p^{r}$ \textit{with center at} $a=(a_{1},\dots ,a_{n})\in \mathbb{Q}_{p}^{n}$, and take $S_{r}^{n}(0):=S_{r}^{n}$. We notice that $S_{0}^{1}= \mathbb{Z}_{p}^{\times }$ (the group of units of $\mathbb{Z}_{p}$), but $\left( \mathbb{Z}_{p}^{\times }\right) ^{n}\subsetneq S_{0}^{n}$. The balls and spheres are both open and closed subsets in $\mathbb{Q}_{p}^{n}$. In addition, two balls in $\mathbb{Q}_{p}^{n}$ are either disjoint or one is contained in the other.

As a topological space $\left(\mathbb{Q}_{p}^{n},||\cdot ||_{p}\right) $ is totally disconnected, i.e. the only
connected \ subsets of $\mathbb{Q}_{p}^{n}$ are the empty set and the points. A subset of $\mathbb{Q}_{p}^{n}$ is compact if and only if it is closed and bounded in $\mathbb{Q}_{p}^{n}$, see e.g. \cite[Section 1.3]{V-V-Z}, or \cite[Section 1.8]%
{Albeverio et al}. The balls and spheres are compact subsets. Thus $\left(\mathbb{Q}_{p}^{n},||\cdot ||_{p}\right) $ is a locally compact topological space.

We will use $\Omega \left( p^{-r}||x-a||_{p}\right) $ to denote the
characteristic function of the ball $B_{r}^{n}(a)$. We will use the notation
$1_{A}$ for the characteristic function of a set $A$. Along the article $%
d^{n}x$ will denote a Haar measure on $\mathbb{Q}_{p}^{n}$ normalized so that $\int_{\mathbb{Z}_{p}^{n}}d^{n}x=1.$

\subsection{Some function spaces}

A complex-valued function $\varphi $ defined on $\mathbb{Q}_{p}^{n}$ is \textit{called locally constant} if for any $x\in\mathbb{Q}_{p}^{n}$ there exist an integer $l(x)\in \mathbb{Z}$ such that
\begin{equation*}
\varphi (x+x^{\prime })=\varphi (x)\text{ for }x^{\prime }\in B_{l(x)}^{n}.
\end{equation*}%
A function $\varphi : \mathbb{Q}_{p}^{n}\rightarrow \mathbb{C}$ is called a \textit{Bruhat-Schwartz function
(or a test function)} if it is locally constant with compact support. The $\mathbb{C}$-vector space of Bruhat-Schwartz functions is denoted by $\mathcal{D}:=\mathcal{D}(\mathbb{Q}_{p}^{n})$. Let $\mathcal{D}^{\prime }:=\mathcal{D}^{\prime }(\mathbb{Q}_{p}^{n})$ denote the set of all continuous functional (distributions) on $%
\mathcal{D}$. The natural pairing $\mathcal{D}^{\prime }(\mathbb{Q}_{p}^{n})\times \mathcal{D}(\mathbb{Q}_{p}^{n})\rightarrow \mathbb{C}$ is denoted as $\left( T,\varphi \right) $
for $T\in \mathcal{D}^{\prime }(\mathbb{Q}_{p}^{n})$ and $\varphi \in \mathcal{D}(\mathbb{Q}_{p}^{n})$, see e.g. \cite[Section 4.4]{Albeverio et al}.

Every $f\in $ $L_{loc}^{1}(\mathbb{Q}_{p}^{n})$ defines a distribution $f\in \mathcal{D}^{\prime }\left(\mathbb{Q}_{p}^{n}\right) $ by the formula
\begin{equation*}
\left( f,\varphi \right) =\int\limits_{\mathbb{Q} _{p}^{n}}f\left( x\right) \varphi \left( x\right) d^{n}x.
\end{equation*}%
Such distributions are called \textit{regular distributions}.

We will denote by $\mathcal{D}_{\mathbb{R}}:=\mathcal{D}_{\mathbb{R}}( \mathbb{Q}_{p}^{n})$, the $\mathbb{R}$-vector space of test functions, and by $\mathcal{D}_{\mathbb{R}}^{\prime }:=\mathcal{D}_{\mathbb{R}}^{\prime }(\mathbb{Q}_{p}^{n})$, the $\mathbb{R}$-vector space of distributions.

Given $\rho \in \lbrack 0,\infty )$, we denote by $L^{\rho }:=L^{\rho }\left(\mathbb{Q}_{p}^{n}\right) :=L^{\rho }\left(\mathbb{Q}_{p}^{n},d^{n}x\right) ,$ the $\mathbb{C}-$vector space of all the complex valued functions $g$ satisfying $\int_{\mathbb{Q}_{p}^{n}}\left\vert g\left( x\right) \right\vert ^{\rho }d^{n}x<\infty $, $L^{\infty }\allowbreak :=L^{\infty }\left(\mathbb{Q}
_{p}^{n}\right) =L^{\infty }\left(\mathbb{Q}_{p}^{n},d^{n}x\right) $ denotes the $\mathbb{C}-$vector space of all the complex valued functions $g$ such that the
essential supremum of $|g|$ is bounded. The corresponding $\mathbb{R}$%
-vector spaces are denoted as $L_{\mathbb{R}}^{\rho }\allowbreak :=L_{\mathbb{R}}^{\rho }\left( 
\mathbb{Q}_{p}^{n}\right) =L_{\mathbb{R}}^{\rho }\left( \mathbb{Q}_{p}^{n},d^{n}x\right) $, 1$\leq \rho \leq \infty $.

Let denote by $C_{\mathbb{C}}:=C(\mathbb{Q}_{p}^{n},\mathbb{C})$ the $\mathbb{C}-$vector space of all complex valued functions which are continuous, by $C_{\mathbb{R}}:=C(\mathbb{Q}_{p}^{n},\mathbb{R})$ the $\mathbb{R}-$vector space of continuous functions.
Set
\begin{equation*}
C_{0}(\mathbb{Q}_{p}^{n},\mathbb{C}):=\left\{ f: \mathbb{Q}_{p}^{n}\rightarrow \mathbb{C};\text{ }f\text{ is continuous and }\lim_{||x||_{p}\rightarrow \infty
}f(x)=0\right\} ,
\end{equation*}%
where $\lim_{||x||_{p}\rightarrow \infty }f(x)=0$ means that for every $%
\epsilon >0$ there exists a compact subset $B(\epsilon )$ such that $%
|f(x)|<\epsilon $ for $x\in 
\mathbb{Q}
_{p}^{n}\backslash B(\epsilon ).$ We recall that $(C_{0}(
\mathbb{Q}
_{p}^{n},\mathbb{C}),||\cdot ||_{L^{\infty }})$ is a Banach space. The
corresponding $\mathbb{R}$-vector space will be denoted as $C_{0}(
\mathbb{Q}
_{p}^{n},\mathbb{R})$.

\subsection{Fourier transform}

Set $\chi _{p}(y)=\exp (2\pi i\{y\}_{p})$ for $y\in  \mathbb{Q}_{p}$. The map $\chi _{p}(\cdot )$ is an additive character on $\mathbb{Q}_{p}$, i.e. a continuous map from $\left( 
\mathbb{Q}_{p},+\right) $ into $S$ (the unit circle considered as multiplicative
group) satisfying $\chi _{p}(x_{0}+x_{1})=\chi _{p}(x_{0})\chi _{p}(x_{1})$,
$x_{0},x_{1}\in \mathbb{Q}_{p}$. The additive characters of $\mathbb{Q}_{p}$ form an Abelian group which is isomorphic to $\left(\mathbb{Q}_{p},+\right) $, the isomorphism is given by $\xi \rightarrow \chi _{p}(\xi x)$, see e.g. \cite[Section 2.3]{Albeverio et al}.

 Given $x=(x_{1},\dots ,x_{n}),$ $\xi =(\xi _{1},\dots ,\xi _{n})\in 
\mathbb{Q}_{p}^{n}$, we set $x\cdot \xi :=\sum_{j=1}^{n}x_{j}\xi _{j}$. If $f\in L^{1}$ its Fourier transform is defined by
\begin{equation*}
(\mathcal{F}f)(\xi )=\int_{\mathbb{Q}_{p}^{n}}\chi _{p}(\xi \cdot x)f(x)d^{n}x,\quad \text{for }\xi \in \mathbb{Q}_{p}^{n}.
\end{equation*}

We will also use the notation $\mathcal{F}_{x\rightarrow \xi }f$ and $\widehat{f}$\ for the Fourier transform of $f$. The Fourier transform is a linear isomorphism from $\mathcal{D}(\mathbb{Q}_{p}^{n})$ onto itself satisfying
\begin{equation}
(\mathcal{F}(\mathcal{F}f))(\xi )=f(-\xi ),  \label{FF(f)}
\end{equation}
for every $f\in \mathcal{D}(\mathbb{Q}_{p}^{n}),$ see e.g. \cite[Section 4.8]{Albeverio et al}. If $f\in L^{2},$
its Fourier transform is defined as
\begin{equation*}
(\mathcal{F}f)(\xi )=\lim_{k\rightarrow \infty }\int_{||x||\leq p^{k}}\chi
_{p}(\xi \cdot x)f(x)d^{n}x,\quad \text{for }\xi \in \mathbb{Q}_{p}^{n},
\end{equation*}
where the limit is taken in $L^{2}.$ We recall that the Fourier transform is
unitary on $L^{2},$ i.e. $||f||_{L^{2}}=||\mathcal{F}f||_{L^{2}}$ for $f\in
L^{2}$ and that (\ref{FF(f)}) is also valid in $L^{2}$, see e.g. \cite[Chapter $III$, Section 2]{Taibleson}.

The Fourier transform $\mathcal{F}\left[ T\right] $ of a distribution $T\in \mathcal{D}^{\prime }\left(\mathbb{Q}_{p}^{n}\right) $ is defined by
\begin{equation*}
\left( \mathcal{F}\left[ T\right] ,\varphi \right) =\left( T,\mathcal{F} \left[ \varphi \right] \right) \text{ for all }\varphi \in \mathcal{D}(\mathbb{Q}_{p}^{n})\text{.}
\end{equation*}%
The Fourier transform $T\rightarrow \mathcal{F}\left[ T\right] $ is a linear isomorphism from $\mathcal{D}^{\prime }\left(\mathbb{Q}_{p}^{n}\right) $\ onto itself. Furthermore, $T=\mathcal{F}\left[ \mathcal{F} 
\left[ T\right] \left( -\xi \right) \right] $. We also use the notation $\mathcal{F}_{x\rightarrow \xi }T$ and $\widehat{T}$ for the Fourier
transform of $T.$

\section{Preliminary Results-Elliptic Pseudo-differential Operators \label{preliminary_results}}

Along this article denote by $\mathbb{N} =\{1,2,\ldots \}$ the set of natural numbers.\ Next, we collect some results about of elliptic polynomials of degree $d.$ For more details the reader may
consult \cite{Zu}, \cite[Subsection 2.3]{Zu-lib1}. 

\begin{definition}
Let $f(\xi )\in \mathcal{\mathbf{\mathbb{Q}}}_{p}^{n}[\xi _{1},\ldots ,\xi _{n}]$ be a non-constant polynomial. We say that $f(\xi )$ is an elliptic polynomial of degree $d,$ if it satisfies: 
\begin{enumerate}[(i)]
\item $f(\xi )$ is a homogeneous polynomial of degree $d,$ and
\item $f(\xi )=0\Leftrightarrow \xi =0.$
\end{enumerate}
\end{definition}

\begin{lemma}
\cite[Lemma 1]{Zu}\label{lemma1} Let $f(\xi )\in \mathcal{\mathbf{%
\mathbb{Q} }}_{p}^{n}[\xi _{1},\ldots ,\xi _{n}]$ be an elliptic polynomial of degree $d.$ Then there exist positive constants $C_{0}=C_{0}(f),$ $C_{1}=C_{1}(f)$ such that
\begin{equation*}
C_{0}||\xi ||_{p}^{d}\leq |f(\xi )|_{p}\leq C_{1}||\xi ||_{p}^{d},\text{ for
every }\xi \in \mathcal{\mathbf{\mathbb{Q}}}_{p}^{n}.
\end{equation*}
\end{lemma}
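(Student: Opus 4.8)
The plan is to exploit the homogeneity of $f$ in order to reduce the two-sided estimate to a statement on the compact unit sphere $S_{0}^{n}=\{\xi\in\mathbb{Q}_{p}^{n}: ||\xi||_{p}=1\}$, and then to scale back to an arbitrary $\xi$.

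First I would dispose of the trivial case $\xi=0$, where both sides vanish. For $\xi\neq 0$ the essential observation is that both quantities transform in the same way under scaling: since $f$ is homogeneous of degree $d$, for any nonzero $\lambda\in\mathbb{Q}_{p}$ we have $f(\lambda\xi)=\lambda^{d}f(\xi)$, hence $|f(\lambda\xi)|_{p}=|\lambda|_{p}^{d}|f(\xi)|_{p}$, while $||\lambda\xi||_{p}^{d}=|\lambda|_{p}^{d}||\xi||_{p}^{d}$. Consequently the ratio $|f(\xi)|_{p}/||\xi||_{p}^{d}$ is invariant under $\xi\mapsto\lambda\xi$, so it suffices to bound it on $S_{0}^{n}$.

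Next I would analyze $|f|_{p}$ on $S_{0}^{n}$. The map $\xi\mapsto|f(\xi)|_{p}$ is continuous on $\mathbb{Q}_{p}^{n}$, being the composition of the polynomial $f$ with the continuous norm $|\cdot|_{p}$, and $S_{0}^{n}$ is compact. By the ellipticity condition (ii), $f(\xi)\neq 0$ for every $\xi\in S_{0}^{n}$ (because $0\notin S_{0}^{n}$), so $|f|_{p}$ is strictly positive on $S_{0}^{n}$. A continuous, strictly positive function on a compact set attains a positive minimum and a finite maximum; in fact its image is contained in the discrete value set $\{p^{k}: k\in\mathbb{Z}\}$ and hence is finite, so the extrema are genuinely attained. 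Denote them by $C_{0}>0$ and $C_{1}<\infty$, so that $C_{0}\leq|f(\eta)|_{p}\leq C_{1}$ for all $\eta\in S_{0}^{n}$.

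Finally I would scale back. For $\xi\neq 0$ put $\gamma=ord(\xi)$, so that $||\xi||_{p}=p^{-\gamma}$ and $\eta:=p^{-\gamma}\xi$ satisfies $||\eta||_{p}=1$, i.e. $\eta\in S_{0}^{n}$. Writing $\xi=p^{\gamma}\eta$ and using homogeneity gives $|f(\xi)|_{p}=|p^{\gamma}|_{p}^{d}|f(\eta)|_{p}=p^{-\gamma d}|f(\eta)|_{p}$, whereas $||\xi||_{p}^{d}=p^{-\gamma d}$. Substituting the bounds on $|f(\eta)|_{p}$ yields $C_{0}||\xi||_{p}^{d}\leq|f(\xi)|_{p}\leq C_{1}||\xi||_{p}^{d}$, which is the claim. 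The one genuinely delicate point is the strict positivity of $C_{0}$: this is precisely where ellipticity condition (ii), namely that $f$ vanishes only at the origin, together with compactness of the sphere, is indispensable, since without it the infimum of $|f|_{p}$ on $S_{0}^{n}$ could be zero and no lower bound of this form would hold.
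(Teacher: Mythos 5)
Your proof is correct and is precisely the standard argument for this lemma (the paper itself does not reproduce a proof, citing it from Z\'u\~niga-Galindo's work, where the same homogeneity-plus-compactness reduction to the unit sphere $S_{0}^{n}$ is used). All the key points are in place: the scale invariance of the ratio $|f(\xi)|_{p}/||\xi||_{p}^{d}$, the compactness of $S_{0}^{n}$, and the strict positivity of $|f|_{p}$ there coming from ellipticity condition (ii).
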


\begin{remark}
\label{Nota_support} If $f(\xi )\in \mathcal{\mathbf{%
\mathbb{Q}
}}_{p}^{n}[\xi _{1},\ldots ,\xi _{n}]$ is an elliptic polynomial of degree $d,$ then for every $1\leq \rho <\infty $ and every $t>0,$ $e^{-t|f(\xi)|_{p}^{\beta }}\in L^{\rho }(\mathbb{Q}_{p}^{n})$. Indeed, by Lemma \ref{lemma1} we have
\begin{eqnarray*}
\int_{\mathbf{%
\mathbb{Q}
}_{p}^{n}}e^{-t\rho |f(\xi )|_{p}^{\beta }}d^{n}\xi  &\leq &\int_{\mathbf{%
\mathbb{Q}
}_{p}^{n}}e^{-t\rho C_{0}||\xi ||_{p}^{\beta d}}d^{n}\xi  \\
&\leq &\int_{\mathbf{ 
\mathbb{Z}
}_{p}^{n}}e^{-t\rho C_{0}||\xi ||_{p}^{\beta d}}d^{n}\xi +\int_{\mathbf{
\mathbb{Q}
}_{p}^{n}\backslash \mathbf{
\mathbb{Z}
}_{p}^{n}}e^{-t\rho C_{0}||\xi ||_{p}^{\beta d}}d^{n}\xi \  \\
&=&(1-p^{-n})\left( \sum_{j=0}^{\infty }\frac{p^{-nj}}{e^{t\rho
C_{0}p^{-j\beta d}}}+\sum_{i=1}^{\infty }\frac{p^{ni}}{e^{t\rho
C_{0}p^{i\beta d}}}\right) <\infty .
\end{eqnarray*}
\end{remark}

The elliptic pseudo-differential operators were introduced by Z\'{u}\~{n}iga-Galindo, see \cite{Zu-2003},\cite{Zu-2004}. Below we list some results about the heat \textit{Kernel} associated with these operators, and that will be of great importance throughout this article.

\begin{definition}
\label{def_operator}Let $f(\xi )\in \mathcal{\mathbf{\mathbb{Q}}}_{p}^{n}[\xi _{1},\ldots ,\xi_{n}]$ be a non-constant polynomial. An
operator of the form $(f(\mathcal{\partial },\beta )\varphi )(x):=\mathcal{F}_{\xi\rightarrow x }^{-1}(|f|_{p}^{\beta }\mathcal{F}_{x\rightarrow \xi }\varphi ),$ $\beta >0,$ $\varphi \in \mathcal{D}(\mathcal{\mathbf{\mathbb{Q}}}_{p}^{n}),$ is called a pseudo-differential operator with symbol $|f|_{p}^{\beta }:=|f(\xi )|_{p}^{\beta }.$

If $f(\xi )\in \mathcal{\mathbf{\mathbb{Z}}}_{p}^{n}[\xi _{1},\ldots ,\xi _{n}]$ is an elliptic polynomial of degree $d,$ then we say that $|f|_{p}^{\beta }$ is an elliptic symbol, and that $f(\mathcal{\partial },\beta )$ is an elliptic pseudo-differential operator of order $d.$
\end{definition}

\begin{remark}
$(i)$ \cite[Remark 1]{Zu} Note that $g(\xi _{1},\xi _{2})=\xi _{1}^{d}+p\xi_{2}^{d}$, $d\geq 2,$ is an elliptic polynomial of degree $d.$ Therefore, given $m\geq 1,$ there exists an elliptic polynomial in $m$ variables.

$(ii)$ \cite[Subsection 2.3]{Zu-lib1} There are infinitely many elliptic
polynomials and for any $n\in \mathbb{N}$ and $p\neq 2,$ there exists an elliptic polynomial $h(\xi _{1},\ldots ,\xi_{n})$ with coefficients in $\mathbb{Z}_{p}^{\times }$ and degree $2d(n):=2d$ such that
\begin{equation*}
|h(\xi _{1},\ldots ,\xi _{n})|_{p}=||(\xi _{1},\ldots ,\xi _{n})||_{p}^{2d}.
\end{equation*}
For the above note that the Taibleson operator is elliptic for $p\neq 2$.
Moreover, we can obtain pseudo-differential operators with radial symbols and with non-radial symbols.
\end{remark}

Throughout this paper (unless otherwise stated) we will assume that $f(\mathcal{\partial },\beta )$ is an elliptic pseudo-differential operator of order $d.$

\begin{remark}
Considering the Cauchy problem
\begin{equation}
\left\{
\begin{array}{ll}
\frac{\partial u}{\partial t}(x,t)=-(f(\mathcal{\partial },\beta )u)(x,t),%
\text{ \ } & t\in \lbrack 0,\infty ),\text{ \ }x\in \mathbb{Q}_{p}^{n} \\
&  \\
u(x,0)=u_{0}(x)\in \mathcal{D}(\mathcal{\mathbf{\mathbb{Q}}}_{p}^{n})\text{.} &
\end{array}%
\right.  \label{Cauchy_problem}
\end{equation}

We have that the function
\begin{equation*}
u(x,t):=\int\nolimits_{\mathbf{\mathbb{Q}}_{p}^{n}}\chi _{p}(-x,\xi )e^{-t|f(\xi )|_{p}^{\beta }}\widehat{u_{0}}(\xi)d^{n}\xi ;\text{ }u_{0}(x)\in \mathcal{D}(\mathbb{Q}_{p}^{n}),\ x\in
\mathbb{Q}_{p}^{n},\text{ }t\geq 0,
\end{equation*}%
satisfies the Cauchy problem (\ref{Cauchy_problem}).
\end{remark}

We define the \textit{heat Kernel} attached to operator $f(\mathcal{\partial},\beta )$ as
\begin{equation}
Z(x,t):=\mathcal{F}_{\xi\rightarrow x }^{-1}(e^{-t|f(\xi)|_{p}^{\beta }})=\int\nolimits_{\mathbf{
\mathbb{Q}}_{p}^{n}}\chi _{p}(-x,\xi )e^{-t|f(\xi )|_{p}^{\beta }}d^{n}\xi ,\text{ }
t>0,\text{ }x\in \mathbf{\mathbb{Q}}_{p}^{n}.  \label{def_Z(x,t)}
\end{equation}%
When considering $Z(x,t)$ as a function of $x$ for $t$ fixed, we will write $%
Z_{t}(x).$

\begin{lemma}
\label{properties_Z(x,t)}$(i)$ \cite[(P1)-Proposition 2]{Zu}$\int\nolimits_{\mathbf{\mathbb{Q}}_{p}^{n}}Z_{t}(x)d^{n}x=1$, for any $t>0$.

$(ii)$ \cite[(P3) Proposition 2]{Zu} $Z_{t+s}(x)=\int\nolimits_{\mathbf{\mathbb{Q}}_{p}^{n}}Z_{t}(x-y)Z_{s}(y)d^{n}y$, for $t,s>0,$ i.e. the heat Kernel satisfies the Chapman-Kolmogorov equation.

$(iii)$ \cite[Theorem 2]{Zu} For every $x\in \mathbf{
\mathbb{Q}
}_{p}^{n}$ and every $t>0$, we have that $Z(x,t)\geq 0.$
\end{lemma}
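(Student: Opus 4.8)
The plan is to handle all three parts through the Fourier-analytic description of the kernel, using that by Remark~\ref{Nota_support} the function $\xi\mapsto e^{-t|f(\xi)|_{p}^{\beta}}$ lies in $L^{1}\cap L^{2}$ for every $t>0$; hence $Z_{t}=\mathcal{F}_{\xi\rightarrow x}^{-1}(e^{-t|f|_{p}^{\beta}})$ is a well-defined continuous function vanishing at infinity. I would first record that $Z_{t}$ is real-valued: since $f$ is homogeneous of degree $d$ we have $|f(-\xi)|_{p}=|f(\xi)|_{p}$, so the substitution $\xi\mapsto-\xi$ in \eqref{def_Z(x,t)} gives $\overline{Z_{t}(x)}=Z_{t}(x)$.

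For (i) I would integrate over the ball $B_{N}^{n}$ and let $N\to\infty$. By Fubini (legitimate since $e^{-t|f|_{p}^{\beta}}\in L^{1}$ and $B_{N}^{n}$ has finite measure) together with the elementary identity $\int_{B_{N}^{n}}\chi_{p}(-x\cdot\xi)\,d^{n}x=p^{nN}\mathbf{1}_{B_{-N}^{n}}(\xi)$, one obtains
\[
\int_{B_{N}^{n}}Z_{t}(x)\,d^{n}x=p^{nN}\int_{B_{-N}^{n}}e^{-t|f(\xi)|_{p}^{\beta}}\,d^{n}\xi ,
\]
which is exactly the average of the continuous integrand over the shrinking ball $B_{-N}^{n}$. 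As $N\to\infty$ this average converges to its value at the center, namely $e^{-t|f(0)|_{p}^{\beta}}=1$, because $f(0)=0$; this yields (i). For (ii) I would use that the Fourier transform turns products into convolutions: from $e^{-(t+s)|f|_{p}^{\beta}}=e^{-t|f|_{p}^{\beta}}e^{-s|f|_{p}^{\beta}}$ with both factors in $L^{1}\cap L^{2}$, applying $\mathcal{F}^{-1}$ gives $Z_{t+s}=Z_{t}\ast Z_{s}$, the Chapman--Kolmogorov identity; the only care needed is the convolution theorem for $L^{2}$ functions, which is standard.

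The genuine difficulty is (iii). By Bochner's theorem on the locally compact abelian group $\mathbb{Q}_{p}^{n}$, the continuous function $Z_{t}=\mathcal{F}^{-1}(e^{-t|f|_{p}^{\beta}})$ is non-negative if and only if its Fourier transform $e^{-t|f|_{p}^{\beta}}$ is a positive-definite function. There is no soft route here: on any finite quotient $B_{N}^{n}/B_{-M}^{n}$, positive-definiteness of the discretized symbol is equivalent to non-negativity of its discrete inverse transform, i.e. to the very values of $Z_{t}$ one wishes to control, so positivity must come from genuine structural input. The route I would take is to prove directly that $|f(\xi)|_{p}^{\beta}$ is a continuous negative-definite function — equivalently that it admits a L\'evy--Khinchine representation with non-negative jump measure — whence Schoenberg's theorem gives positive-definiteness of $e^{-t|f|_{p}^{\beta}}$ and Bochner's theorem gives $Z_{t}\geq 0$.

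To establish the negative-definiteness I would exploit homogeneity. The scaling relation $Z_{t}(x)=|\lambda|_{p}^{-n}Z_{t|\lambda|_{p}^{-d\beta}}(\lambda^{-1}x)$ for $\lambda\in\mathbb{Q}_{p}^{\times}$, the local constancy of $\xi\mapsto|f(\xi)|_{p}$ on $\mathbb{Q}_{p}^{n}\setminus\{0\}$, and the two-sided bound of Lemma~\ref{lemma1}, together reduce the problem to a computation on the unit sphere $S_{0}^{n}$; and for $0<\beta\le 1$ one can moreover subordinate, writing $e^{-t|f|_{p}^{\beta}}=\int_{0}^{\infty}e^{-u|f|_{p}}\mu_{t,\beta}(du)$ with $\mu_{t,\beta}\ge 0$, thereby reducing the exponent to the base case $\beta=1$. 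Carrying out this reduction and proving the base-case positivity is precisely the main obstacle and is the analytic heart of the cited result; I emphasize that it must be obtained \emph{independently} of the negative-definiteness theorem of Section~\ref{Negative_definite_function}, since the latter is itself derived from the fact that $(Z_{t})_{t>0}$ is already a convolution semigroup of probability measures.
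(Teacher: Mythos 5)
The paper gives no proof of this lemma at all: all three parts are imported verbatim from \cite{Zu} (Proposition 2 and Theorem 2 there), so the only question is whether your blind argument actually establishes them. Your treatments of (i) and (ii) are essentially correct and standard: the truncated computation $\int_{B_{N}^{n}}Z_{t}(x)\,d^{n}x=p^{nN}\int_{B_{-N}^{n}}e^{-t|f(\xi )|_{p}^{\beta }}d^{n}\xi \rightarrow 1$ is fine, as is deducing (ii) from the convolution theorem once $e^{-t|f|_{p}^{\beta }}\in L^{1}\cap L^{2}$ is known. One small point in (i): convergence of the integrals over $B_{N}^{n}$ to $1$ does not by itself give $\int_{\mathbb{Q}_{p}^{n}}Z_{t}=1$; you need either $Z_{t}\in L^{1}$ (Remark \ref{Characte_Z}$-(ii)$, itself a citation of \cite{Zu}) to invoke dominated convergence, or part (iii) plus monotone convergence. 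Say which.

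Part (iii) is where the proposal has a genuine gap: you correctly diagnose that positivity of $Z_{t}$ is equivalent, via Bochner and Schoenberg, to negative definiteness of $|f|_{p}^{\beta }$, and that this must be obtained independently of Section \ref{Negative_definite_function} (whose Theorem \ref{negative_function} is derived from the very positivity you are trying to prove) --- but you then stop at the crux and explicitly defer ``the analytic heart of the cited result.'' The proposed reduction does not close the gap. Subordination $e^{-t\lambda ^{\beta }}=\int_{0}^{\infty }e^{-u\lambda }\mu _{t,\beta }(du)$ with $\mu _{t,\beta }\geq 0$ is available only for $0<\beta \leq 1$, whereas the lemma is asserted for every $\beta >0$ (and in the non-archimedean setting it genuinely holds for every $\beta >0$, unlike the real case, so any strategy modelled on stable laws is aimed at the wrong target). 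Likewise, ``reduce to the unit sphere by homogeneity and local constancy'' is not an argument: negative definiteness is a condition on arbitrary finite configurations of points, not on the restriction of the symbol to a sphere. The proof in \cite[Theorem 2]{Zu} is instead a direct computation: one writes $Z(x,t)$ as a limit of integrals over balls, performs an Abel summation in the radius using the local constancy of $\xi \mapsto |f(\xi )|_{p}$ away from the origin together with the explicit values of $\int_{B_{j}^{n}}\chi _{p}(-x\cdot \xi )\,d^{n}\xi$, and exhibits $Z(x,t)$ as a sum of manifestly non-negative terms. Without that computation, or an equivalent one, part (iii) remains unproved in your proposal.
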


\begin{remark}
\label{Characte_Z}$(i)$ By Lemma \ref{properties_Z(x,t)}$-(iii),$ we have
that if $\varphi \in \mathcal{D}(
\mathbb{Q}
_{p}^{n})$ is a positive test function, then
\begin{equation*}
\int\nolimits_{\mathbf{
\mathbb{Q}
}_{p}^{n}}\chi _{p}(-x,\xi )e^{-t|f(\xi )|_{p}^{\beta }}\widehat{\varphi }%
(\xi )d^{n}\xi =(Z_{t}\ast \varphi )(x)\geq 0.
\end{equation*}%
$(ii)$ for any fixed $t>0$ and any $1\leq \rho <\infty ,$ $Z(x,t)\in L^{\rho
}(
\mathbb{Q}
_{p}^{n}).$ Moreover, $Z(x,t)$ is continuous function in $x,$ for any fixed $%
t>0,$ see \cite[Corollary 1]{Zu}. On the other hand, the solution the Cauchy
problem (\ref{Cauchy_problem}) satisfies%
\begin{equation}
u(x,t)=(Z_{t}\ast u_{0})(x),\text{ }u_{0}(x)\in \mathcal{D}(\mathbb{Q}_{p}^{n}),\ x\in 
\mathbb{Q}_{p}^{n},\text{ }t\geq 0,  \label{solution}
\end{equation}%
so that $u(\cdot ,t)$ is a continuous function for any $t>0.$

$(iii)$ As a direct consequence of \cite[Theorem 1 and 2]{Zu}, we have for
any $x\in \mathbf{\mathbb{Q}}_{p}^{n}$ and every $t>0,$ $Z(x,t)\leq At||x||_{p}^{-d\beta -n},$ where $A$ is a positive constant.
\end{remark}

\section{\label{Feller and Transition}Feller Semigroups and Markov
Transition Functions Associated to Heat Kernel}

The goal of this section is to prove that there are a Feller semigroup and a uniformly stochastically continuous $C_{0}-$transition function on $\mathbf{\mathbb{Q}}_{p}^{n}$ associated with the Heat Kernel attached to operator $f(\mathcal{\partial },\beta ).$ Significantly, we explicitly write the Feller semigroup and the Markov Transition Function.
\subsection{Feller Semigroups}

\begin{definition}
A one-parameter family $\{T_{t}\}_{t\geq 0}$ of bounded linear operators on $C_{0}(\mathbf{\mathbb{Q}}_{p}^{n})$ into itself is called a contraction semigroup if it satisfies
the following conditions:

\begin{enumerate}[(i)]
\item $T_{t+s}=T_{t}\cdot T_{s}$ for all $t,s\geq 0.$

\item $\lim_{t\rightarrow 0^{+}}||T_{t}u-u||_{L^{\infty }}=0$ for every $%
u\in C_{0}(\mathbf{\mathbb{Q}}_{p}^{n})$ (strongly continuous);

\item $||T_{t}||_{L^{\infty }}\leq 1$ for all $t\geq 0.$
\end{enumerate}
\end{definition}

For $u\in C_{0}(\mathbf{\mathbb{Q}}_{p}^{n}),$ $x\in \mathbf{\mathbb{Q}}_{p}^{n}$ and $t\geq 0$, we define%
\begin{equation}
T_{t}u(x):=\left\{
\begin{array}{ll}
u(x) & \text{if }t=0\text{, } \\
&  \\
\int_{\mathbf{\mathbb{Q}}_{p}^{n}}Z_{t}(x-y)u(y)d^{n}y=(Z_{t}\ast u)(x) & \text{if }t>0.%
\end{array}%
\right.  \label{Def_Tt}
\end{equation}

\begin{lemma}
\label{bounded_operator} For all $t\geq 0$,
\begin{equation*}
T_{t}:C_{0}(\mathbb{Q}_{p}^{n})\rightarrow C_{0}(\mathbb{Q}_{p}^{n})
\end{equation*}%
is a bounded linear operator with $||T_{t}||_{L^{\infty }}\leq 1$.
\end{lemma}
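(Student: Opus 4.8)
The plan is to treat the cases $t=0$ and $t>0$ separately. The case $t=0$ is immediate, since $T_0=\mathrm{id}$ is linear, maps $C_0(\mathbb{Q}_p^n)$ into itself, and has operator norm $1$. So fix $t>0$; then $T_t u = Z_t * u$, whose linearity is clear, and the work divides into the norm bound and the verification that $T_t u\in C_0(\mathbb{Q}_p^n)$. For the norm bound I would invoke the two basic properties of the heat kernel, namely $Z_t\geq 0$ (Lemma \ref{properties_Z(x,t)}-(iii)) and $\int_{\mathbb{Q}_p^n}Z_t(x)\,d^nx=1$ (Lemma \ref{properties_Z(x,t)}-(i)). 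Since $Z_t\geq 0$ and has unit mass it lies in $L^1$, so for $u\in C_0\subset L^\infty$ the convolution converges absolutely for every $x$, and
\[
|T_t u(x)|\leq \int_{\mathbb{Q}_p^n}Z_t(x-y)\,|u(y)|\,d^ny\leq \|u\|_{L^\infty}\int_{\mathbb{Q}_p^n}Z_t(x-y)\,d^ny=\|u\|_{L^\infty},
\]
using translation invariance of the Haar measure in the last step. Hence $\|T_tu\|_{L^\infty}\leq\|u\|_{L^\infty}$ and $\|T_t\|_{L^\infty}\leq 1$.

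It remains to show $T_tu\in C_0(\mathbb{Q}_p^n)$. For continuity I would use that $u\in C_0$ is uniformly continuous on the topological group $\mathbb{Q}_p^n$: after the substitution $z=x-y$ one has $T_tu(x)=\int_{\mathbb{Q}_p^n}Z_t(z)\,u(x-z)\,d^nz$, and since $\|(x-z)-(x'-z)\|_p=\|x-x'\|_p$, uniform continuity of $u$ makes $|u(x-z)-u(x'-z)|$ uniformly small in $z$ whenever $\|x-x'\|_p$ is small; integrating against the probability density $Z_t$ then yields (uniform) continuity of $T_tu$. Alternatively, continuity follows from the fact recorded in Remark \ref{Characte_Z}-(ii) (that $Z_t*\varphi$ is continuous for $\varphi\in\mathcal{D}$), combined with density of $\mathcal{D}$ in $C_0$ and the contraction bound just established.

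The decisive point, and the main obstacle, is the decay at infinity. Here I would exploit the pointwise bound $Z_t(x)\leq At\|x\|_p^{-d\beta-n}$ from Remark \ref{Characte_Z}-(iii) together with the ultrametric identity $\|x-y\|_p=\|x\|_p$, valid whenever $\|x\|_p>\|y\|_p$. Given $\epsilon>0$, choose $M$ so large that $|u(y)|<\epsilon$ for $\|y\|_p>p^M$, and split
\[
T_tu(x)=\int_{\|y\|_p\leq p^M}Z_t(x-y)u(y)\,d^ny+\int_{\|y\|_p>p^M}Z_t(x-y)u(y)\,d^ny.
\]
The second integral is bounded in absolute value by $\epsilon\int_{\mathbb{Q}_p^n}Z_t=\epsilon$. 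For the first, once $\|x\|_p>p^M$ the ultrametric identity gives $\|x-y\|_p=\|x\|_p$ throughout the domain, so $Z_t(x-y)\leq At\|x\|_p^{-d\beta-n}$ and the first integral is at most $At\|x\|_p^{-d\beta-n}\|u\|_{L^\infty}\,\mathrm{vol}(B_M^n)$, which tends to $0$ as $\|x\|_p\to\infty$ because $d\beta+n>0$. Therefore $\limsup_{\|x\|_p\to\infty}|T_tu(x)|\leq\epsilon$, and letting $\epsilon\to 0$ shows $T_tu$ vanishes at infinity. Together with continuity this gives $T_tu\in C_0(\mathbb{Q}_p^n)$, completing the argument.
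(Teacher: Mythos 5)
Your proof is correct and follows the same overall strategy as the paper: the contraction bound from $Z_t\geq 0$ and $\int Z_t=1$, continuity of the convolution, and decay at infinity via the pointwise estimate $Z_t(x)\leq At\|x\|_p^{-d\beta-n}$ combined with the ultrametric identity $\|x-y\|_p=\|x\|_p$ for $\|x\|_p$ large. The one place you genuinely diverge is the decay step: the paper simply assumes ``without loss of generality'' that $\mathrm{Supp}(u)\subseteq B_M^n$, which is not literally available for a general $u\in C_0(\mathbb{Q}_p^n)$ (such $u$ need not have compact support), and then writes the resulting bound as equal to $0$ rather than tending to $0$. Your $\epsilon$-splitting of the integral into the region where $|u|<\epsilon$ and a fixed ball handles arbitrary $u\in C_0$ directly and is the cleaner argument; the paper's shortcut could be repaired by invoking density of $\mathcal{D}$ in $C_0$ together with the contraction bound, but it does not say so. Your continuity argument via uniform continuity of $u$ on the group is also more explicit than the paper's one-line assertion, though both are fine.
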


\begin{proof}
We consider the case $t>0$, since in the case $t=0,$ the assertion is clear.

Let $u\in C_{0}(\mathbf{\mathbb{Q}}_{p}^{n})$ and $x\in \mathbf{\mathbb{Q}}_{p}^{n}.$ Then, by Lemma \ref{properties_Z(x,t)}-$(i)$ we have \ \ \ \
\begin{equation}
|T_{t}u(x)|=\left\vert \int_{\mathbf{\mathbb{Q}}_{p}^{n}}Z_{t}(x-y)u(y)d^{n}y\right\vert \leq ||u||_{L^{\infty }}\int_{\mathbf{\mathbb{Q}}_{p}^{n}}Z_{t}(x-y)d^{n}y=||u||_{L^{\infty }}.  \label{cota_Tt}
\end{equation}

Moreover, since $Z_{t}(x)\in L^{1}(\mathbb{Q}_{p}^{n})$, $t>0,$ and $u$ is bounded, we have that $T_{t}u(x)=(Z_{t}\ast u)(x)$ is continuous.

On the other hand, if $||x||_{p}\gg 0$ and assuming without loss of
generality that $Supp(u)\subseteq B_{M}^{n},$ $M\in \mathbb{N} .$ Then\ by Remark \ref{Characte_Z}-$(iii)$ and the fact that $||\cdot ||_{p} $ is an ultranorm, we have
\begin{eqnarray*}
0 &\leq &\left\vert T_{t}u(x)\right\vert \leq ||u||_{L^{\infty
}}\int_{B_{M}^{n}}Z_{t}(x-y)d^{n}y\leq At||u||_{L^{\infty
}}\int_{B_{M}^{n}}||x-y||_{p}^{-d\beta -n}d^{n}y \\
&=&At||u||_{L^{\infty }}||x||_{p}^{-d\beta -n}Vol(B_{M}^{n})=0.
\end{eqnarray*}%
Therefore, the space $C_{0}(\mathbf{\mathbb{Q}}_{p}^{n})$ is an invariant subspace for the operators $T_{t}$, $t\geq 0,$ i.e.
\begin{equation*}
u\in C_{0}(\mathbf{\mathbb{Q}}_{p}^{n})\longrightarrow T_{t}u(x)\in C_{0}(\mathbb{Q}_{p}^{n}).
\end{equation*}
\end{proof}

\begin{lemma}
\label{semigroup} The family of operators $\{T_{t}\}_{t\geq 0}$\ defined in (\ref{Def_Tt}) determine a semigroup over the space $C_{0}(\mathbb{Q}_{p}^{n})$.
\end{lemma}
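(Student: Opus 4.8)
The plan is to verify the two semigroup axioms directly from the definition (\ref{Def_Tt}). The identity operator $T_{0}$ disposes of every case in which $t=0$ or $s=0$, so the only substantive point is to establish $T_{t+s}=T_{t}\cdot T_{s}$ for $t,s>0$. For such $t,s$ each operator acts as convolution against the heat kernel, so I would reduce the claimed operator identity to showing that for all $u\in C_{0}(\mathbf{\mathbb{Q}}_{p}^{n})$ and all $x\in \mathbf{\mathbb{Q}}_{p}^{n}$,
\begin{equation*}
(Z_{t+s}\ast u)(x)=\bigl(Z_{t}\ast (Z_{s}\ast u)\bigr)(x).
\end{equation*}

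The engine of the argument is the Chapman--Kolmogorov identity, Lemma \ref{properties_Z(x,t)}-$(ii)$. Writing out the right-hand side as
\begin{equation*}
\bigl(Z_{t}\ast (Z_{s}\ast u)\bigr)(x)=\int_{\mathbf{\mathbb{Q}}_{p}^{n}}Z_{t}(x-y)\int_{\mathbf{\mathbb{Q}}_{p}^{n}}Z_{s}(y-z)u(z)\,d^{n}z\,d^{n}y,
\end{equation*}
I would interchange the order of integration and then, for fixed $x$ and $z$, evaluate the inner integral $\int_{\mathbf{\mathbb{Q}}_{p}^{n}}Z_{t}(x-y)Z_{s}(y-z)\,d^{n}y$ by the translation-invariant substitution $y\mapsto y+z$ (legitimate since $d^{n}y$ is Haar measure). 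This converts it into $\int_{\mathbf{\mathbb{Q}}_{p}^{n}}Z_{t}((x-z)-y)Z_{s}(y)\,d^{n}y=Z_{t+s}(x-z)$ by Chapman--Kolmogorov. Re-inserting this yields $\int_{\mathbf{\mathbb{Q}}_{p}^{n}}Z_{t+s}(x-z)u(z)\,d^{n}z=(Z_{t+s}\ast u)(x)$, which is exactly $T_{t+s}u(x)$. Combined with Lemma \ref{bounded_operator}, which already guarantees that each $T_{t}$ is a well-defined bounded operator on $C_{0}(\mathbf{\mathbb{Q}}_{p}^{n})$, this gives the semigroup property.

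The only step requiring care is the interchange of integrals, i.e. the appeal to Fubini's theorem, and I expect this integrability bookkeeping to be the main obstacle. Here the non-negativity of the heat kernel, Lemma \ref{properties_Z(x,t)}-$(iii)$, does the work: since $Z_{t},Z_{s}\geq 0$ and $u$ is bounded, the double integral of the absolute value of the integrand is dominated by
\begin{equation*}
||u||_{L^{\infty }}\int_{\mathbf{\mathbb{Q}}_{p}^{n}}\int_{\mathbf{\mathbb{Q}}_{p}^{n}}Z_{t}(x-y)Z_{s}(y-z)\,d^{n}y\,d^{n}z=||u||_{L^{\infty }}\int_{\mathbf{\mathbb{Q}}_{p}^{n}}Z_{t+s}(x-z)\,d^{n}z=||u||_{L^{\infty }}<\infty ,
\end{equation*}
where the last equality uses Lemma \ref{properties_Z(x,t)}-$(i)$. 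Hence Tonelli--Fubini applies and the interchange is justified; the algebraic heart of the proof is then nothing more than associativity of convolution combined with Chapman--Kolmogorov.
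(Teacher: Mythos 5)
Your proof is correct and follows essentially the same route as the paper: expand the iterated convolution, interchange the integrals, translate to reduce the inner integral to the Chapman--Kolmogorov identity of Lemma \ref{properties_Z(x,t)}-$(ii)$, and reassemble. You are in fact slightly more careful than the paper, which performs the interchange silently and does not separate out the $t=0$ or $s=0$ cases; your Tonelli justification via Lemma \ref{properties_Z(x,t)}-$(i)$ and $(iii)$ is exactly the right bookkeeping.
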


\begin{proof}
For $u\in C_{0}(\mathbf{\mathbb{Q}}_{p}^{n})$ and $t,s\geq 0$ we have that
\begin{eqnarray*}
T_{t}(T_{s}u)(x) &=&\int_{\mathbf{\mathbb{Q}}_{p}^{n}}Z_{t}(x-y)\left[ (T_{s}u)(y)\right] d^{n}y \\
&=&\int_{\mathbf{\mathbb{Q}}_{p}^{n}}\left[ \int_{\mathbf{\mathbb{Q}}_{p}^{n}}Z_{t}(x-y)Z_{s}(y-z)d^{n}y\right] u(z)d^{n}z \\
&=&\int_{\mathbf{
\mathbb{Q}
}_{p}^{n}}\left[ \int_{\mathbf{
\mathbb{Q}
}_{p}^{n}}Z_{t}((x-z)-w)Z_{s}(w)d^{n}w\right] u(z)d^{n}z \\
&=&\int_{\mathbf{
\mathbb{Q}
}_{p}^{n}}Z_{t+s}(x-z)u(z)d^{n}z=T_{t+s}u(x).
\end{eqnarray*}
\end{proof}

\begin{lemma}
\label{strongly_continuous}For all $u\in C_{0}(\mathbf{
\mathbb{Q}
}_{p}^{n})$ we have that
\begin{equation*}
\lim_{t\rightarrow 0^{+}}||T_{t}u-u||_{L^{\infty }}=0.
\end{equation*}
\end{lemma}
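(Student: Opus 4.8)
The plan is to treat $\{Z_{t}\}_{t>0}$ as an approximate identity and run the classical splitting argument, the two ingredients being the normalization $\int_{\mathbb{Q}_{p}^{n}}Z_{t}(w)\,d^{n}w=1$ and positivity $Z_{t}\geq 0$ from Lemma \ref{properties_Z(x,t)}-$(i)$, $(iii)$, together with the pointwise decay of Remark \ref{Characte_Z}-$(iii)$. First I would rewrite the difference $T_{t}u-u$ in a form where the normalization can be exploited. After the change of variables $w=x-y$ in $T_{t}u(x)=\int_{\mathbb{Q}_{p}^{n}}Z_{t}(x-y)u(y)\,d^{n}y$, and using $u(x)=u(x)\int_{\mathbb{Q}_{p}^{n}}Z_{t}(w)\,d^{n}w$, one gets for every $x\in\mathbb{Q}_{p}^{n}$
\[
T_{t}u(x)-u(x)=\int_{\mathbb{Q}_{p}^{n}}Z_{t}(w)\bigl[u(x-w)-u(x)\bigr]\,d^{n}w,
\]
so that, since $Z_{t}\geq 0$, one has $|T_{t}u(x)-u(x)|\leq\int_{\mathbb{Q}_{p}^{n}}Z_{t}(w)\,|u(x-w)-u(x)|\,d^{n}w$.

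Second, fix $\epsilon>0$ and split the integral at $||w||_{p}\leq p^{m}$ versus $||w||_{p}>p^{m}$, where $m$ will be chosen using uniform continuity. Every $u\in C_{0}(\mathbb{Q}_{p}^{n})$ is uniformly continuous, so I would pick $m\in\mathbb{Z}$ such that $||w||_{p}\leq p^{m}$ forces $|u(x-w)-u(x)|<\epsilon$ for all $x$ simultaneously. Then the inner piece is bounded, uniformly in $x$ and $t$, by
\[
\int_{||w||_{p}\leq p^{m}}Z_{t}(w)\,|u(x-w)-u(x)|\,d^{n}w\leq\epsilon\int_{\mathbb{Q}_{p}^{n}}Z_{t}(w)\,d^{n}w=\epsilon.
\]

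Third, for the outer piece I would invoke the estimate $Z(w,t)\leq At\,||w||_{p}^{-d\beta-n}$ of Remark \ref{Characte_Z}-$(iii)$, giving
\[
\int_{||w||_{p}>p^{m}}Z_{t}(w)\,d^{n}w\leq At\int_{||w||_{p}>p^{m}}||w||_{p}^{-d\beta-n}\,d^{n}w=At\,C(m),
\]
where $C(m)<\infty$ because the exponent satisfies $d\beta+n>n$ (summing over spheres $S_{j}^{n}$, $j>m$, leaves $\sum_{j>m}p^{-jd\beta}<\infty$ as $d\beta>0$). Hence the outer piece is at most $2||u||_{L^{\infty}}At\,C(m)$, a bound independent of $x$ that tends to $0$ as $t\rightarrow 0^{+}$. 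Combining the two pieces yields $||T_{t}u-u||_{L^{\infty}}\leq\epsilon+2||u||_{L^{\infty}}At\,C(m)$, which is $<2\epsilon$ for all sufficiently small $t$, proving the claim.

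The main obstacle is securing the uniform-in-$x$ control of the inner estimate, and this rests on the uniform continuity of $u\in C_{0}(\mathbb{Q}_{p}^{n})$; I would justify it by noting that $u$ is smaller than any prescribed level outside a large ball $B_{L}^{n}$ while being uniformly continuous on that compact ball, and then patch the two regimes. The tail estimate of Remark \ref{Characte_Z}-$(iii)$ is the other decisive ingredient: its linear decay in $t$ is precisely what forces the outer contribution to vanish as $t\rightarrow 0^{+}$, and without it the approximate-identity argument would not close.
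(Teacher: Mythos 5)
Your proposal is correct and follows essentially the same route as the paper: split $T_{t}u(x)-u(x)=\int Z_{t}(w)[u(x-w)-u(x)]\,d^{n}w$ into a near region controlled by the normalization $\int Z_{t}=1$ together with the continuity of $u$, and a far region controlled by the decay bound $Z(w,t)\leq At\,\|w\|_{p}^{-d\beta-n}$ of Remark \ref{Characte_Z}-$(iii)$. If anything, your explicit appeal to uniform continuity to make the splitting radius independent of $x$ is a small tightening of the paper's argument, where the radius $s(x,\epsilon)$ nominally depends on $x$ before the supremum is taken.
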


\begin{proof}
The desired equality follows from the following two statements:

\begin{claim}
\label{Claim 1} Let fixed $x\in \mathbf{
\mathbb{Q}
}_{p}^{n}.$ Then, for any given number $\epsilon >0$ we can find a $%
s:=s(x,\epsilon )\in
\mathbb{Z}
$ such that if $||x-y||_{p}<p^{s}$ then
\begin{equation*}
|I_{1}|=\left\vert
\int_{||x-y||<p^{s}}Z_{t}(x-y)[u(y)-u(x)]d^{n}y\right\vert <\epsilon .
\end{equation*}

The proof of the Claim is a direct consequence of Lemma \ref%
{properties_Z(x,t)}-$(i)$ taking into account that $u\in C_{0}(\mathbf{
\mathbb{Q}
}_{p}^{n})$.
\end{claim}

\begin{claim}
\label{Claim 2} Let fixed $x\in \mathbf{
\mathbb{Q}
}_{p}^{n}.$ Then, for any given number $\epsilon >0$ we can find a $%
s:=s(x,\epsilon )\in
\mathbb{Z}
$ such that if $||x-y||_{p}\geq p^{s}$ then
\begin{equation*}
I_{2}=\int_{||x-y||\geq p^{s}}Z_{t}(x-y)[u(y)-u(x)]d^{n}y\rightarrow 0,\text{
when }t\rightarrow 0^{+}.
\end{equation*}

The Claim's proof is as follows: Since that $u\in C_{0}(\mathbf{\mathbb{Q}}_{p}^{n})$, then for any number $\epsilon >0$, however small, there exists
some number $s:=s(x,\epsilon )\in \mathbb{Z}$ such that if $||x-y||_{p}<p^{s}$ then $||u(y)-u(x)||_{L^{\infty }}<\epsilon .$ Therefore, by Remark \ref{Characte_Z}-$(iii)$ we have \
\begin{eqnarray*}
\left\vert \int_{||x-y||\geq p^{s}}Z_{t}(x-y)[u(y)-u(x)]d^{n}y\right\vert
&\leq &2||u||_{L^{\infty }}\int_{||x-y||\geq p^{s}}Z_{t}(x-y)d^{n}y \\
&=&2||u||_{L^{\infty }}\int_{||w||\geq p^{s}}Z_{t}(w)d^{n}w \\
&\leq &2At||u||_{L^{\infty }}\int_{||w||\geq p^{s}}||w||_{p}^{-d\beta
-n}d^{n}w \\
&=&2Atp^{sd\beta }||u||_{L^{\infty }}\int_{||v||\geq 1}||v||_{p}^{-d\beta
-n}d^{n}v \\
&\leq &Ct||u||_{L^{\infty }}.
\end{eqnarray*}%

Therefore, By Claim \ref{Claim 1} and Claim \ref{Claim 2}, given any $%
\epsilon >0$ we have that
\begin{equation*}
\lim_{t\rightarrow 0^{+}}\sup |(T_{t}u)(x)-u(x)|\leq \lim_{t\rightarrow 0^{+}}\sup \left\vert |I_{1}|+|I_{2}|\right\vert \leq \epsilon ,\text{ for all }x\in \mathbf{\mathbb{Q}}_{p}^{n}.
\end{equation*}
\end{claim}
\end{proof}

\begin{definition}
\cite[Definition 12.8]{Berg-Gunnar} A strongly continuous contraction
semigroups $\{T_{t}\}_{t\geq 0}$ on $C_{0}(\mathbf{\mathbb{Q}}_{p}^{n}),$ for which all the operators $T_{t}$ are positive, i.e. such that for all $t>0$%
\begin{equation*}
u\in C_{0}(\mathbf{\mathbb{Q}}_{p}^{n})\text{ with }u\geq 0\text{ implies }T_{t}u\geq 0,
\end{equation*}%
is called a Feller semigroup on $\mathbf{\mathbb{Q}}_{p}^{n}.$\
\end{definition}

By Lemma \ref{properties_Z(x,t)}$-(iii)$ and (\ref{Def_Tt}), we have that if
$u\in C_{0}(\mathbf{\mathbb{Q}}_{p}^{n})$ and $u\geq 0,$ then $(T_{t}u)(x)\geq 0$, $t\geq 0.$ Moreover, as
a consequence of the previous lemma (Lemma \ref{bounded_operator}, Lemma \ref%
{semigroup}, Lemma \ref{strongly_continuous}), and (\ref{Def_Tt}), we have
that the family of operators $\{T_{t}\}_{t\geq 0}$ defined in (\ref{Def_Tt})
determine a strongly continuous contraction semigroup on $C_{0}(\mathbf{\mathbb{Q}}_{p}^{n}).$We have proved the following theorem

\begin{theorem}
\label{Feller_semigroups} The family of operators $\{T_{t}\}_{t\geq 0}$
defined in (\ref{Def_Tt}) determine a Feller semigroup on $\mathbf{\mathbb{Q}}_{p}^{n}.$
\end{theorem}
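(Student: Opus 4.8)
The plan is to verify that $\{T_{t}\}_{t\geq 0}$ satisfies each clause in the definition of a Feller semigroup, assembling the three lemmas already established together with the nonnegativity of the heat kernel. By Lemma \ref{bounded_operator} each $T_{t}$ is a bounded linear operator mapping $C_{0}(\mathbb{Q}_{p}^{n})$ into itself with $||T_{t}||_{L^{\infty}}\leq 1$, which is exactly the contraction property (iii). Lemma \ref{semigroup} supplies the semigroup law $T_{t+s}=T_{t}\cdot T_{s}$ for all $t,s\geq 0$, i.e. property (i); this rests on the Chapman--Kolmogorov identity of Lemma \ref{properties_Z(x,t)}-$(ii)$. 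Finally, Lemma \ref{strongly_continuous} gives $\lim_{t\rightarrow 0^{+}}||T_{t}u-u||_{L^{\infty}}=0$ for every $u\in C_{0}(\mathbb{Q}_{p}^{n})$, i.e. strong continuity (ii). Together these three lemmas show that $\{T_{t}\}_{t\geq 0}$ is a strongly continuous contraction semigroup on $C_{0}(\mathbb{Q}_{p}^{n})$.

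It then remains only to check the positivity required by the definition, namely that $u\geq 0$ implies $T_{t}u\geq 0$. For $t=0$ this is immediate, since $T_{0}u=u$. For $t>0$ I would read off from the convolution representation (\ref{Def_Tt}) that
\begin{equation*}
T_{t}u(x)=\int_{\mathbb{Q}_{p}^{n}}Z_{t}(x-y)u(y)\,d^{n}y,
\end{equation*}
and since $Z_{t}(x-y)\geq 0$ by Lemma \ref{properties_Z(x,t)}-$(iii)$ while $u(y)\geq 0$ by hypothesis, the integrand is nonnegative, whence $T_{t}u(x)\geq 0$ for every $x\in\mathbb{Q}_{p}^{n}$.

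Collecting these facts, $\{T_{t}\}_{t\geq 0}$ is a strongly continuous contraction semigroup of positive operators on $C_{0}(\mathbb{Q}_{p}^{n})$, which is precisely a Feller semigroup. I do not expect any genuine obstacle at this stage: the substantive analytic work---the $C_{0}$-invariance together with the contraction bound (which use the unit-mass property of Lemma \ref{properties_Z(x,t)}-$(i)$ and the decay estimate $Z(x,t)\leq At||x||_{p}^{-d\beta-n}$), the semigroup law, and the strong-continuity estimate---has already been carried out in the preceding lemmas. The theorem is thus essentially an assembly step, and the only point left to record explicitly is positivity, which is the direct consequence of the nonnegativity of the heat kernel noted just before the statement.
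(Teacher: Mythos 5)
Your proposal is correct and follows essentially the same route as the paper: the paper likewise assembles Lemmas \ref{bounded_operator}, \ref{semigroup}, and \ref{strongly_continuous} to get a strongly continuous contraction semigroup, and derives positivity directly from Lemma \ref{properties_Z(x,t)}-$(iii)$ together with the convolution formula (\ref{Def_Tt}). No discrepancies to report.
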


\begin{remark}
\label{Conservative} As a consequence of the previous theorem and Lemma \ref%
{properties_Z(x,t)}-$(i)$, the Feller semigroup $\{T_{t}\}_{t\geq 0}$ is
conservative, i.e. $T_{t}1=1,$ for all $t>0.$
\end{remark}

\subsection{Markov Transition Functions}

\begin{remark}
\label{Feller_Taira} Following the definition and terminology used on Feller
semigroups given in \cite[p. 43]{Taira}, we will show that our Feller
semigroup $\{T_{t}\}_{t\geq 0}$ satisfies the following properties: By Lemma \ref{bounded_operator} and Lemma \ref{strongly_continuous} we have that the
semigroup $\{T_{t}\}_{t\geq 0}$ is strongly continuous in $t$ for all $t\geq 0:$
\begin{equation*}
\lim_{s\rightarrow 0^{+}}||T_{t+s}f-T_{t}f||_{L^{\infty }}=0,\text{ }f\in C_{0}(\mathbf{\mathbb{Q}}_{p}^{n}).
\end{equation*}

Moreover, by Lemma \ref{properties_Z(x,t)}$-(i),$ Lemma \ref{properties_Z(x,t)}$-(iii)$ and (\ref{Def_Tt}) we have that the Feller semigroup $\{T_{t}\}_{t\geq 0}$ is non-negative and contractive on $C_{0}(\mathbf{\mathbb{Q}}_{p}^{n})$:%
\begin{equation*}
f\in C_{0}(\mathbf{\mathbb{Q}}_{p}^{n}),\text{ }0\leq f(x)\leq 1\text{ on }K\Longrightarrow \text{ }0\leq T_{t}f(x)\leq 1\text{ on }\mathbf{\mathbb{Q}}_{p}^{n}.
\end{equation*}
\end{remark}

\begin{definition}
\begin{enumerate}[(i)]
\item A function $p_{t}(x,E)$, defined for all $t\geq 0$, $x\in 
\mathbb{Q}_{p}^{n}$ and $E\in \mathcal{B}(\mathbb{Q}_{p}^{n})$, is called a Markov transition function on $\mathbb{Q}_{p}^{n}$ if it satisfies the following four conditions:
\begin{enumerate}[(a)]
\item $p_{t}(x,\cdot )$ is a measure on $\mathcal{B}(\mathbb{Q}_{p}^{n})$ and $p_{t}(x,\mathbb{Q}_{p}^{n})\leq 1$ for all $t\geq 0$ and $x\in \mathbb{Q}_{p}^{n}.$
\item $p_{t}(\cdot ,E)$ is a Borel measurable function for all $t\geq 0$ and $E\in \mathcal{B}(
\mathbb{Q}_{p}^{n})$.
\item $p_{0}(x,\{x\})=1$ for all $x\in \mathbf{\mathbb{Q}}_{p}^{n}.$
\item \textbf{(The Chapman-Kolmogorov equation)} For all $t,s\geq 0,$ $x\in
\mathbf{\mathbb{Q}}_{p}^{n}$ and $E\in \mathcal{B}(\mathbb{Q}_{p}^{n})$, we have the equations
\begin{equation*}
p_{t+s}(x,E)=\int_{\mathbf{\mathbb{Q}}_{p}^{n}}p_{t}(x,d^{n}y)p_{s}(y,E).
\end{equation*}
\end{enumerate}
\item We say that the Markov transition function $p_{t}(x,\cdot )$ on $
\mathbb{Q}_{p}^{n}$ satisfies the $condition$ $(L)$ if for each $s>0$ and each compact
subset $E\subset \mathbb{Q}_{p}^{n}$,%
\begin{equation*}
\lim_{x\rightarrow \infty }\sup_{0\leq t\leq s}p_{t}(x,E)=0.
\end{equation*}
\item A Markov transition function $p_{t}(x,\cdot )$ on $\mathbb{Q}_{p}^{n}$ is said to be uniformly stochastically continuous on $\mathbb{Q}_{p}^{n}$ if the following condition is satisfied:

For each $r\in \mathbb{Z}$ and each compact $E\subset \mathbb{Q}_{p}^{n}$, we have that
\begin{equation*}
\lim_{t\rightarrow 0^{+}}\sup_{x\in E}[1-p_{t}(x,B_{r}^{n}(x))]=0.
\end{equation*}
\item We say that $p_{t}(x,\cdot )$ is a $C_{0}-$function if the space $C_{0}(\mathbb{Q}_{p}^{n})$ is an invariant subspace for the operators $T_{t}:$ 
\begin{equation*}
f\in C_{0}(\mathbb{Q}_{p}^{n})\Longrightarrow T_{t}f\in C_{0}(\mathbb{Q}_{p}^{n}).
\end{equation*}
\end{enumerate}
\end{definition}

\begin{definition}
For $E\in \mathcal{B}(\mathbb{Q}_{p}^{n})$, we define%
\begin{equation}
p_{t}(x,E)=\left\{
\begin{array}{ll}
Z_{t}(x)\ast 1_{E}(x)\text{,} & \text{\ for }t>0\text{, }x\in \mathbf{
\mathbb{Q}
}_{p}^{n} \\
&  \\
1_{E}(x), & \text{for }t=0\text{, }x\in \mathbf{
\mathbb{Q}
}_{p}^{n}.%
\end{array}%
\right.  \label{def_p_t}
\end{equation}
\end{definition}

\begin{theorem}
\label{Transition} $p_{t}(x,\cdot )$ is a uniformly stochastically continuous $C_{0}-$transition function on $\mathbf{\mathbb{Q}}_{p}^{n}$, satisfy condition $(L)$ and the formula
\begin{equation*}
T_{t}f(x):=\int_{\mathbf{\mathbb{Q}}_{p}^{n}}p_{t}(x,d^{n}y)f(y)
\end{equation*}%
holds. Moreover, it is the transition function of some strong Markov
processes $\mathfrak{X}$ with state space $\mathbb{Q}_{p}^{n}$ and transition function $p_{t}(x,\cdot )$ whose paths are right continuous and have no discontinuities other than jumps.
\end{theorem}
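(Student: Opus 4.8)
The plan is to verify each of the four defining conditions of a Markov transition function for $p_t(x,\cdot)$ as given in (\ref{def_p_t}), then establish the three additional properties (uniform stochastic continuity, the $C_0$-property, and condition $(L)$), and finally invoke a standard theorem from the theory of Feller semigroups to produce the strong Markov process. First I would observe that for $t>0$ we have the explicit formula $p_t(x,E)=(Z_t\ast 1_E)(x)=\int_{\mathbb{Q}_p^n}Z_t(x-y)1_E(y)\,d^ny=\int_E Z_t(x-y)\,d^ny$, so that $p_t(x,\cdot)$ is the measure on $\mathcal{B}(\mathbb{Q}_p^n)$ with density $y\mapsto Z_t(x-y)$. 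Condition (a) then follows from Lemma \ref{properties_Z(x,t)}-$(iii)$ (nonnegativity of $Z_t$, giving a genuine measure) together with Lemma \ref{properties_Z(x,t)}-$(i)$ (total mass $\int_{\mathbb{Q}_p^n}Z_t(x-y)\,d^ny=1$, hence $p_t(x,\mathbb{Q}_p^n)\le 1$, in fact $=1$); condition (c) is immediate from the $t=0$ branch $p_0(x,\{x\})=1_{\{x\}}(x)=1$; condition (b) follows because $p_t(\cdot,E)=(Z_t\ast 1_E)$ is continuous in $x$ by the same reasoning as in Lemma \ref{bounded_operator}, hence Borel measurable; and condition (d), the Chapman-Kolmogorov equation, is a direct translation of Lemma \ref{properties_Z(x,t)}-$(ii)$, since $\int_{\mathbb{Q}_p^n}p_t(x,d^ny)p_s(y,E)=\int_E\left(\int_{\mathbb{Q}_p^n}Z_t(x-y)Z_s(y-z)\,d^ny\right)d^nz=\int_E Z_{t+s}(x-z)\,d^nz=p_{t+s}(x,E)$ by Fubini.

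Next I would establish the connecting formula $T_tf(x)=\int_{\mathbb{Q}_p^n}p_t(x,d^ny)f(y)$: for $t>0$ this is simply $\int_{\mathbb{Q}_p^n}Z_t(x-y)f(y)\,d^ny=(Z_t\ast f)(x)=T_tf(x)$ by definition (\ref{Def_Tt}) of the semigroup, first for $f=1_E$ (indicator functions) and then extended to all $f\in C_0(\mathbb{Q}_p^n)$ by the standard measure-theoretic passage through simple functions and monotone/dominated convergence, which is justified since $p_t(x,\cdot)$ is a finite measure. The $C_0$-property is then exactly the content of Lemma \ref{bounded_operator}, which shows $T_t$ maps $C_0(\mathbb{Q}_p^n)$ into itself. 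Uniform stochastic continuity requires showing $\lim_{t\to 0^+}\sup_{x\in E}[1-p_t(x,B_r^n(x))]=0$ for each $r\in\mathbb{Z}$ and each compact $E$; here I would write $1-p_t(x,B_r^n(x))=\int_{\mathbb{Q}_p^n}Z_t(x-y)\,d^ny-\int_{B_r^n(x)}Z_t(x-y)\,d^ny=\int_{\|w\|_p>p^r}Z_t(w)\,d^nw$ after the substitution $w=x-y$, which is crucially independent of $x$, and then bound this tail integral using the estimate $Z_t(w)\le At\|w\|_p^{-d\beta-n}$ from Remark \ref{Characte_Z}-$(iii)$ exactly as in the proof of Claim \ref{Claim 2}, obtaining a bound of the form $Ct$ that vanishes as $t\to 0^+$ uniformly in $x$.

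For condition $(L)$, I would fix $s>0$ and a compact $E\subset\mathbb{Q}_p^n$, say $E\subseteq B_M^n$, and estimate $p_t(x,E)=\int_E Z_t(x-y)\,d^ny$ for $\|x\|_p$ large: by the ultrametric inequality, when $\|x\|_p$ exceeds the radius of $E$ we have $\|x-y\|_p=\|x\|_p$ for all $y\in E$, so Remark \ref{Characte_Z}-$(iii)$ gives $p_t(x,E)\le At\,\|x\|_p^{-d\beta-n}\,\mathrm{Vol}(E)\le As\,\|x\|_p^{-d\beta-n}\,\mathrm{Vol}(E)$ uniformly in $0\le t\le s$, and the right-hand side tends to $0$ as $\|x\|_p\to\infty$; this is essentially the computation already carried out in Lemma \ref{bounded_operator}. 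Finally, having assembled a uniformly stochastically continuous $C_0$-transition function satisfying condition $(L)$ whose associated operators form a Feller semigroup (Theorem \ref{Feller_semigroups}), I would appeal to the standard construction in the theory of Feller semigroups and Hunt processes (in the non-archimedean setting this is the same machinery used, e.g., in the references on Feller semigroups cited in the paper) which guarantees that such a transition function is realized by a strong Markov process $\mathfrak{X}$ with right-continuous paths having no discontinuities other than jumps. I expect the main obstacle to be the last step: the construction of the process itself is not a direct computation but rests on an external structure theorem (the correspondence between Feller transition functions satisfying $(L)$ and uniform stochastic continuity, and strong Markov processes with the stated path regularity), so the real work is in checking that our $p_t(x,\cdot)$ meets precisely the hypotheses of that theorem — which is exactly why the preceding verifications of condition $(L)$ and uniform stochastic continuity, both reducing to the $x$-independent tail estimate on $Z_t$, are the technical heart of the argument.
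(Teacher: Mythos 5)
Your proposal is correct, but it takes a genuinely more hands-on route than the paper. The paper's proof is a two-line citation: having already established in Theorem \ref{Feller_semigroups} and Remark \ref{Feller_Taira} that $\{T_{t}\}_{t\geq 0}$ is a non-negative, contractive, strongly continuous semigroup on $C_{0}(\mathbb{Q}_{p}^{n})$, it invokes the general equivalence theorems of Taira (Theorem 2.15: Feller semigroups on a locally compact space correspond to uniformly stochastically continuous $C_{0}$-transition functions satisfying condition $(L)$; Theorem 2.12: such transition functions are realized by strong Markov processes with right-continuous paths whose only discontinuities are jumps). You instead verify the four axioms of a Markov transition function, the connecting formula, the $C_{0}$-property, uniform stochastic continuity, and condition $(L)$ directly from the explicit density $y\mapsto Z_{t}(x-y)$, reducing the last two to the $x$-independent tail estimate $\int_{\|w\|_{p}>p^{r}}Z_{t}(w)\,d^{n}w\leq Ct$ and the ultrametric identity $\|x-y\|_{p}=\|x\|_{p}$ for $\|x\|_{p}$ large, and only then call on the external process-construction theorem. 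Your approach is longer but buys something the paper's does not make explicit: Taira's Theorem 2.15 only guarantees the existence of \emph{some} transition function associated with the Feller semigroup (via a Riesz-representation argument), and identifying it with the concrete formula (\ref{def_p_t}) is left implicit in the paper; your direct verification makes that identification automatic, which is arguably better aligned with the paper's stated aim of writing $p_{t}(x,\cdot)$ explicitly. One small point worth adding to your condition $(L)$ argument: the density bound applies only for $t>0$, so for the supremum over $0\leq t\leq s$ you should note separately that $p_{0}(x,E)=1_{E}(x)=0$ once $\|x\|_{p}$ exceeds the radius of the compact set $E$. You rightly identify that the final step (existence of the process with the stated path regularity) cannot be obtained by computation and must rest on the cited structure theorem; this is exactly where the paper also defers to Taira.
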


\begin{proof}
The results are a consequence of Remark \ref{Feller_Taira} and Theorem \ref{Feller_semigroups} by well-known results in the theory of Markov processes,
see e.g. \cite[Theorem 2.15]{Taira} and \cite[Theorem 2.12]{Taira}. 
\end{proof}

\begin{remark}
\begin{enumerate}[(i)]
\item  Significantly, in the previous theorem we explicitly write the
transition function $p_{t}(x,\cdot )$ on $\mathcal{B}(\mathbb{Q}_{p}^{n}).$ Unlike of \cite[Remark 40]{Zu-lib1}, see also \cite[proof of Lemma 5]{To-Z}, where the existence of a Markov transition function on $\mathcal{B}(\mathbb{Q}_{p}^{n})$ is implicitly guaranteed.

\item   As for any $x\in \mathbf{\mathbb{Q}}_{p}^{n}$ and every $t>0,$ $Z(x,t)\leq At||x||_{p}^{-d\beta -n},$ where $A$ is a positive constant, see Remark \ref{Characte_Z}$-(iii)$, we can show that, for each $r\in \mathbb{Z}$ and each compact $E\subset \mathbb{Q}_{p}^{n}$, we have the condition
\begin{equation*}
\lim_{t\rightarrow 0^{+}}\sup_{x\in E}p_{t}(x,\mathbb{Q}_{p}^{n}\backslash B_{r}^{n}(x))=0,
\end{equation*}

so that by the previous theorem and \cite[Theorem 2.10]{Taira} the paths of the strong Markov process $\mathfrak{X}$ are right continuous on $[0,\infty) $ and have left-hand limits on $[0,\infty )$ almost surely. On the other hand, by using \cite[Sect. 2]{Evans}, it is possible to show that the process $\mathfrak{X}$ constructed in the previous theorem is an L\'{e}vy processes with state space $\mathbb{Q}_{p}^{n}$ and transition function $p_{t}(x,\cdot ).$
\end{enumerate}
\end{remark}

\section{\label{Negative_definite_function}Symbols of Elliptic
Pseudo-differential Operators and Negative Definite Functions}

The goal of this section is show that the symbol of elliptic pseudo-differential operators $(|f(\xi )|_{p}^{\beta })$ is a function negative definite function. Moreover, we show that this symbol can be represented as a combination of a constant $c\geq 0,$ a continuous
homomorphism $l: \mathbb{Q}_{p}^{n}\rightarrow \mathbb{R}$ and a non-negative, continuous quadratic form $q: \mathbb{Q} _{p}^{n}\rightarrow \mathbb{R}.$

\begin{definition}
\label{negative_definite} A function $f:\mathbb{\mathbb{Q}}_{p}^{n}\rightarrow \mathbb{C}$ is called negative definite if  
\begin{equation*}
\sum\nolimits_{i,j=1}^{m}\left( f(x_{i})+\overline{f(x_{j})} -f(x_{i}-x_{j})\right) \lambda _{i}\overline{\lambda _{j}}\geq 0
\end{equation*}%
for all $m\in \mathbb{N}\backslash \{0\},$ $x_{1},\ldots ,x_{m}\in $ $\mathbb{\mathbb{Q}}_{p}^{n},$ $\lambda _{1},\ldots ,\lambda _{m} \in \mathbb{C}$.
\end{definition}

\begin{definition}
\label{def_convolution_semigroup}\cite[Definition 8.1]{Berg-Gunnar} A family
$(\mu _{t})_{t>0}$ of positive bounded measures on $\mathbb{Q}_{p}^{n}$ with the properties
\begin{enumerate}[(i)]
\item $\mu _{t}(\mathbb{Q}_{p}^{n})\leq 1$ for $t>0,$
\item $\mu _{t}\ast \mu _{s}=\mu _{t+s}$ for $t,s>0,$
\item $\lim_{t\rightarrow 0^{+}}\mu _{t}=\delta _{0}$ vaguely ($\delta
_{0} $ denotes the Dirac measure at $0\in \mathbb{Q}_{p}^{n})$, is called a convolution semigroup on $\mathbb{Q}_{p}^{n}.$
\end{enumerate}
\end{definition}

\begin{remark}
\label{Zt_convolution_semigroup} A Feller semigroup $\{T_{t}\}_{t>0}$ on $%
\mathbf{\mathbb{Q}}_{p}^{n}$ is said to be translation invariant if all the operators $T_{t}$
commute with the translations of $\mathbf{\mathbb{Q}}_{p}^{n},$ i.e. if
\begin{equation*}
T_{t}(\tau _{a}f)=\tau _{a}(T_{t}f)\text{ for }a\in \mathbf{\mathbb{Q}}_{p}^{n},\text{ }t>0\text{ and }f\in C_{0}(\mathbf{\mathbb{Q}}_{p}^{n}),
\end{equation*}%
where $\tau _{a}f$ denotes the function
\begin{equation*}
\tau _{a}f(x)=f(x-a)\text{ for }x\in \mathbf{\mathbb{Q}}_{p}^{n}.
\end{equation*}%
It is easy to check that the Feller semigroup $\{T_{t}\}_{t>0}$ is translation invariant.
\end{remark}

\begin{theorem}
\label{negative_function} The symbol $|f|_{p}^{\beta },$ $\beta >0,$ of the pseudo-differential operator given in Definition \ref{def_operator} is a negative definite function.
\end{theorem}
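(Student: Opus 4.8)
The plan is to exhibit $|f|_{p}^{\beta}$ as the continuous negative definite function canonically attached to a convolution semigroup, and then invoke the Schoenberg-type correspondence between the two. Concretely, I would introduce the family of measures $\mu _{t}(d^{n}x):=Z_{t}(x)\,d^{n}x$ for $t>0$ and show that $(\mu _{t})_{t>0}$ is a convolution semigroup on $\mathbb{Q}_{p}^{n}$ in the sense of Definition \ref{def_convolution_semigroup}.

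First I would verify the three defining conditions. Property (i), $\mu _{t}(\mathbb{Q}_{p}^{n})\leq 1$, is immediate from Lemma \ref{properties_Z(x,t)}-(i), which in fact gives equality to $1$; property (ii), $\mu _{t}\ast \mu _{s}=\mu _{t+s}$, is exactly the Chapman--Kolmogorov identity of Lemma \ref{properties_Z(x,t)}-(ii) rewritten for the associated measures; and positivity of $Z_{t}$ (Lemma \ref{properties_Z(x,t)}-(iii)) guarantees that each $\mu _{t}$ is a genuine positive measure. The remaining condition, the vague convergence $\mu _{t}\rightarrow \delta _{0}$ as $t\rightarrow 0^{+}$, I would deduce from the strong continuity already proved in Lemma \ref{strongly_continuous}: for $\varphi \in C_{0}(\mathbb{Q}_{p}^{n})$ one has $\int \varphi (-y)\,\mu _{t}(d^{n}y)=(T_{t}\varphi )(0)\rightarrow \varphi (0)$, and since $\varphi \mapsto \varphi (-\cdot )$ is a bijection of $C_{0}(\mathbb{Q}_{p}^{n})$ this yields $\int \varphi \,d\mu _{t}\rightarrow \varphi (0)$ for every $C_{0}$ function, i.e. vague convergence to $\delta _{0}$.

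Next I would compute the Fourier transform of each $\mu _{t}$. By the definition of the heat kernel in (\ref{def_Z(x,t)}), $Z_{t}=\mathcal{F}^{-1}(e^{-t|f|_{p}^{\beta }})$, so that $\widehat{\mu _{t}}(\xi )=\mathcal{F}(Z_{t})(\xi )=e^{-t|f(\xi )|_{p}^{\beta }}$ for all $\xi \in \mathbb{Q}_{p}^{n}$ and all $t>0$. Setting $\psi (\xi ):=|f(\xi )|_{p}^{\beta }$, the representation $\widehat{\mu _{t}}=e^{-t\psi }$ thus holds for a convolution semigroup; moreover $\psi $ is continuous and, since $f(\xi )=0\Leftrightarrow \xi =0$, we have $\psi (0)=0\geq 0$. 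By the standard correspondence between convolution semigroups and continuous negative definite functions (Schoenberg's theorem, \cite[Section 8]{Berg-Gunnar}), a continuous $\psi $ with $\psi (0)\geq 0$ for which $e^{-t\psi }$ is positive definite for every $t>0$ --- equivalently, for which $e^{-t\psi }=\widehat{\mu _{t}}$ arises from a convolution semigroup --- is negative definite. This gives the claim.

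The routine parts are verifications (i)--(ii) and the Fourier computation. The step demanding genuine care is the vague convergence $\mu _{t}\rightarrow \delta _{0}$, where one must convert the sup-norm convergence of Lemma \ref{strongly_continuous} at the single point $x=0$ into weak convergence of measures, and ensure that the reflection $y\mapsto -y$ produced by the convolution does not spoil the limit. A secondary point of attention is reconciling the Fourier-transform sign conventions of (\ref{def_Z(x,t)}) with those under which the positive-definite / negative-definite dictionary is stated, so that $\widehat{\mu _{t}}=e^{-t\psi }$ is literally the identity required by the correspondence theorem.
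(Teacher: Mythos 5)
Your argument is correct, and its final step coincides with the paper's: both proofs feed the identity $\widehat{Z_{t}}(\xi )=e^{-t|f(\xi )|_{p}^{\beta }}$ into the Berg--Forst bijection between convolution semigroups on $\mathbb{Q}_{p}^{n}$ and continuous negative definite functions, \cite[Theorem 8.3]{Berg-Gunnar}. The genuine difference lies in how the convolution-semigroup property of $(Z_{t})_{t>0}$ is obtained. The paper does not verify Definition \ref{def_convolution_semigroup} directly; it notes that the Feller semigroup $\{T_{t}\}_{t>0}$ is translation invariant (Remark \ref{Zt_convolution_semigroup}) and invokes \cite[Exercise 12.10]{Berg-Gunnar}, which supplies a unique convolution semigroup $(\mu _{t})_{t>0}$ with $T_{t}f=\mu _{t}\ast f$, forcing $\mu _{t}=Z_{t}$. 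You instead check the three axioms by hand: total mass and the convolution identity from Lemma \ref{properties_Z(x,t)}, positivity from Lemma \ref{properties_Z(x,t)}-$(iii)$, and the vague convergence $\mu _{t}\rightarrow \delta _{0}$ extracted from Lemma \ref{strongly_continuous} evaluated at $x=0$; your handling of the reflection $y\mapsto -y$ is sound (and is in any case harmless here, since $Z_{t}$ is real and nonnegative with real Fourier transform, hence even). Your route is more self-contained and makes the probabilistic content explicit --- each $Z_{t}\,d^{n}x$ is a probability measure tending vaguely to $\delta _{0}$ --- at the cost of the extra verification, while the paper's route is shorter but leans on the structure theorem for translation-invariant Feller semigroups; the sign-convention issue you flag for $\widehat{\mu _{t}}=e^{-t\psi }$ is real but dissolves for the same evenness reason.
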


\begin{proof}
By Remark \ref{Zt_convolution_semigroup} and \cite[Exercise 12.10]{Berg-Gunnar}, there exists a uniquely determined convolution semigroup $(\mu _{t})_{t>0}$ on $\mathbf{\mathbb{Q}}_{p}^{n}$ such that $\{T_{t}\}_{t>0}$ is given in terms of $(\mu_{t})_{t>0} $ by the formula
\begin{equation*}
T_{t}f=\mu _{t}\ast f\text{ for }f\in C_{0}(\mathbf{\mathbb{Q}}_{p}^{n})\text{ and }t>0.
\end{equation*}%
Therefore, by (\ref{Def_Tt}) we have that $Z_{t}=\mu _{t},$ $t>0,$ i.e. $(Z_{t})_{t>0}$ is a convolution semigroup on $\mathbf{\mathbb{Q}}_{p}^{n}.$

On the other hand, since there is a one-to-one correspondence between
convolution semigroups on $\mathbf{\mathbb{Q}}_{p}^{n}$ and continuous, negative definite functions on $\mathbf{\mathbb{Q}}_{p}^{n}$, see e.g. \cite[Theorem 8.3]{Berg-Gunnar}, we have by Remark \ref{Nota_support}, Lemma \ref{properties_Z(x,t)}$-(i)$ and (\ref{def_Z(x,t)})
that $|f|_{p}^{\beta },$ $\beta >0,$ is a negative definite function, where
\begin{equation*}
Z_{t}(x)=\int\nolimits_{\mathbf{\mathbb{Q}}_{p}^{n}}\chi _{p}(-x,\xi )e^{-t|f(\xi )|_{p}^{\beta }}d^{n}\xi ,\text{ }%
t>0,\text{ }x\in \mathbf{\mathbb{Q}}_{p}^{n}.
\end{equation*}
\end{proof}

\begin{remark}
\label{Obs_Z} In the proof of previous theorem it was obtained that $(Z_{t})_{t>0}$ is a convolution semigroup on $\mathbf{\mathbb{Q}}_{p}^{n}.$ Therefore, by (\ref{Def_Tt}), (\ref{def_p_t}) and Theorem \ref{Transition} we have that $Z_{t}(x-y)d^{n}y=p_{t}(x,\cdot ).$ Moreover, we
will say that $(Z_{t})_{t>0}$ and $|f(\xi )|_{p}^{\beta }$ are associated, see e.g. \cite[Definition 8.5]{Berg-Gunnar}.
\end{remark}

The infinitesimal generator $(A,D)$ of the semigroup $(P_{t})_{t>0}$ on $E$ is defined by
\begin{equation*}
Af=\lim_{t\rightarrow 0}t^{-1}(P_{t}f-f)\text{ for }f\in D,
\end{equation*}%
where $D$ is the set of elements in $E$ such that this limit exists in $E.$

Let $(A_{0},D_{0})$ (respectively $(A_{b},D_{b}),$ respectively $(A_{2},D_{2})$) denote the infinitesimal generator for $(Z_{t})_{t>0}.$

\begin{lemma}
\label{Lemma_Forst}\cite[Lemma 2]{Forst} The domain $D_{2}$ of $A_{2}$ is
given by
\begin{equation*}
D_{2}=\left\{ f\in L^{2}(\mathbf{\mathbb{Q}}_{p}^{n}):|f(\xi )|_{p}^{\beta }\widehat{f}\in L^{2}(\mathbf{\mathbb{Q}}_{p}^{n})\right\}
\end{equation*}%
and $(A_{2}f)^{\wedge }=-|f(\xi )|_{p}^{\beta }\widehat{f}$ for $f\in D_{2}.$
\end{lemma}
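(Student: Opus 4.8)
The plan is to diagonalize the problem with the Fourier transform, which turns the convolution semigroup $(Z_{t})_{t>0}$ into a multiplication semigroup and reduces the computation of the generator to an elementary pointwise limit controlled by dominated convergence. First I would record that
\[
\mathcal{F}(Z_{t})(\xi)=e^{-t|f(\xi)|_{p}^{\beta}},\qquad t>0,
\]
which is immediate from the defining formula \eqref{def_Z(x,t)}, since $Z_{t}=\mathcal{F}^{-1}_{\xi\rightarrow x}(e^{-t|f(\xi)|_{p}^{\beta}})$ and, because $f$ is homogeneous of degree $d$ so that $|f(-\xi)|_{p}=|f(\xi)|_{p}$, the symbol is invariant under $\xi\mapsto-\xi$. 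By Remark \ref{Nota_support} we have $Z_{t}\in L^{1}(\mathbb{Q}_{p}^{n})$, so for $f\in L^{2}(\mathbb{Q}_{p}^{n})$ the convolution theorem gives $(Z_{t}\ast f)^{\wedge}(\xi)=e^{-t|f(\xi)|_{p}^{\beta}}\widehat{f}(\xi)$. Writing $a(\xi):=|f(\xi)|_{p}^{\beta}\geq 0$ and applying the unitary map $\mathcal{F}$ (unitarity on $L^{2}$ was recalled in Section \ref{Fourier Analysis}) to the difference quotient defining $A_{2}$, the assertion ``$f\in D_{2}$ with $A_{2}f=g$'' becomes equivalent to
\[
\Big\|\tfrac{e^{-t a(\xi)}-1}{t}\,\widehat{f}(\xi)-\widehat{g}(\xi)\Big\|_{L^{2}}\xrightarrow[t\rightarrow 0^{+}]{}0.
\]
Everything then reduces to deciding, for the fixed nonnegative symbol $a$ and $\widehat{f}\in L^{2}$, when $t^{-1}(e^{-ta}-1)\widehat{f}$ converges in $L^{2}$, and to identifying its limit.

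The two elementary facts I would isolate are the pointwise limit $t^{-1}(e^{-ta(\xi)}-1)\rightarrow -a(\xi)$ as $t\rightarrow 0^{+}$, valid for every $\xi$, and the domination
\[
\Big|\tfrac{e^{-t a(\xi)}-1}{t}\Big|=\frac{1-e^{-t a(\xi)}}{t}\leq a(\xi),\qquad t>0,
\]
which follows from $1-e^{-s}\leq s$ for $s\geq 0$. For the sufficiency direction, suppose $a\widehat{f}\in L^{2}$; then $t^{-1}(e^{-ta}-1)\widehat{f}$ is dominated in modulus by the fixed $L^{2}$ function $a|\widehat{f}|$ and converges pointwise to $-a\widehat{f}$, so the dominated convergence theorem yields $L^{2}$ convergence to $-a\widehat{f}$. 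Transporting back through $\mathcal{F}^{-1}$ shows $f\in D_{2}$ and $(A_{2}f)^{\wedge}=-a\widehat{f}=-|f(\xi)|_{p}^{\beta}\widehat{f}$.

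For the necessity direction, suppose $f\in D_{2}$, so that $t^{-1}(e^{-ta}-1)\widehat{f}$ converges in $L^{2}$ to some $h$. Choosing a sequence $t_{n}\rightarrow 0^{+}$ along which the convergence is also almost everywhere and comparing with the pointwise limit $-a\widehat{f}$ forces $h=-a\widehat{f}$ a.e.; since $h\in L^{2}$ this gives $a\widehat{f}\in L^{2}$, placing $f$ in the asserted set and confirming $(A_{2}f)^{\wedge}=-a\widehat{f}$. The main obstacle is exactly this converse: the difference quotients converge only in $L^{2}$, not pointwise, so one must extract an a.e.-convergent subsequence to match the limit with the pointwise candidate $-a\widehat{f}$ and thereby recover the membership $a\widehat{f}\in L^{2}$. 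In passing I would also note the companion fact that $e^{-ta}\widehat{f}\rightarrow\widehat{f}$ in $L^{2}$ (again by dominated convergence with dominant $|\widehat{f}|$), which guarantees that $(Z_{t})_{t>0}$ is strongly continuous on $L^{2}(\mathbb{Q}_{p}^{n})$ and hence that $A_{2}$ is a genuine infinitesimal generator.
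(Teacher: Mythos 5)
Your proof is correct and complete. Note that the paper itself offers no argument for this lemma: it is imported verbatim as a citation of Forst's Lemma~2, so there is no "paper proof" to compare against. Your route --- conjugating by the unitary Fourier transform so that $(Z_{t}\ast\cdot)_{t>0}$ becomes the multiplication semigroup by $e^{-t a(\xi)}$ with $a(\xi)=|f(\xi)|_{p}^{\beta}$, then computing the generator of a multiplication semigroup via the bound $t^{-1}(1-e^{-ta})\leq a$ and dominated convergence for sufficiency, and an a.e.-convergent subsequence of the $L^{2}$-convergent difference quotients for necessity --- is the standard and essentially the only natural argument, and it is exactly how such statements are established in the potential-theoretic literature the paper leans on. Two small points worth keeping: the identity $\widehat{Z_{t}}=e^{-ta}$ does require the observation you make that $|f(-\xi)|_{p}=|f(\xi)|_{p}$ by homogeneity (so the inverse versus direct transform convention is harmless), and your closing remark that $e^{-ta}\widehat{f}\rightarrow\widehat{f}$ in $L^{2}$ is genuinely needed, since without strong continuity on $L^{2}$ the operator $A_{2}$ would not be a bona fide generator; both are handled adequately.
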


\begin{definition}
\cite[Definition 3]{Forst} The semigroup $(Z_{t})_{t>0}$ determines a sesquilinear form $\beta :D_{2}\times D_{2}\rightarrow \mathbb{C}$ by the definition
\begin{equation*}
\beta (f,g)=(-A_{2}f,g)\text{ for }f,g\in D_{2},
\end{equation*}%
and by Lemma \ref{Lemma_Forst} we can write
\begin{equation*}
\beta (f,g)=\left( |f(\xi )|_{p}^{\beta }\widehat{f},\widehat{g}\right)
\text{ for }f,g\in D_{2}.
\end{equation*}
\end{definition}

\begin{definition}
\cite[Definition 10]{Forst} The semigroup $(Z_{t})_{t>0}$ is said to be of
local type, if the conditions
\begin{enumerate}[(i)]
\item For all $f\in D_{b}:supp$ $A_{b}f\subseteq supp$ $f;$
\item For all $f\in D_{0}:supp$ $A_{0}f\subseteq supp$ $f;$
\item For all $f\in D_{2}:supp$ $A_{2}f\subseteq supp$ $f,$
\end{enumerate}
are fulfilled.
\end{definition}

\begin{remark}
\label{probability_measures} By Lemma \ref{properties_Z(x,t)}$-(i)$ and Remark \ref{Obs_Z}\ we have the family $(Z_{t})_{t>0}$ is a convolution semigroup consistent of probability measures on $\mathbb{Q}_{p}^{n}.$
\end{remark}

\begin{lemma}
\label{sesquilinear} $\beta (f,g)=0$ for all $g,h\in \mathcal{K}\cap D_{2}$
such that $g$ is constant in a neighbourhood of the support of $g.$ Here $\mathcal{K=K}(\mathbb{Q}_{p}^{n})$ be the space of complex, continuous functions on $
\mathbb{Q}_{p}^{n}$ with compact support, equipped with the usual topology.
\end{lemma}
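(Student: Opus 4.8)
The plan is to read the statement as: $\beta(f,g)=0$ for all $f,g\in\mathcal{K}\cap D_{2}$ such that $f$ is constant, say $f\equiv c$, on an open neighbourhood $U$ of $\operatorname{supp}g$. The whole argument rests on the identity $\beta(f,g)=(-A_{2}f,g)=\int_{\mathbb{Q}_{p}^{n}}(-A_{2}f)(x)\,\overline{g(x)}\,d^{n}x$ together with the locality encoded in the definition of local type. First I would fix a $p$-adic ball $W=B_{r}^{n}$, which is an additive subgroup of $\mathbb{Q}_{p}^{n}$, with $\operatorname{supp}g\subseteq W\subseteq U$; this is possible since $\operatorname{supp}g$ is compact, $U$ is open, and the balls form a neighbourhood basis. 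Writing $f=c\,1_{W}+f_{0}$ with $f_{0}:=f-c\,1_{W}\in\mathcal{K}\cap D_{2}$, the relation $f\equiv c$ on $W$ gives $f_{0}\equiv0$ on $W$, so $\operatorname{supp}f_{0}\subseteq\complement W$ is disjoint from $\operatorname{supp}g$. By linearity of $\beta$ in its first argument it then suffices to prove $\beta(f_{0},g)=0$ and $\beta(1_{W},g)=0$.

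The first of these is immediate from the local-type hypothesis: it gives $\operatorname{supp}A_{2}f_{0}\subseteq\operatorname{supp}f_{0}$, which is disjoint from $\operatorname{supp}g$, so the integrand of $\beta(f_{0},g)=\int_{\mathbb{Q}_{p}^{n}}(-A_{2}f_{0})(x)\,\overline{g(x)}\,d^{n}x$ vanishes almost everywhere and $\beta(f_{0},g)=0$. For the second piece I would compute $A_{2}1_{W}$ on $W$ by hand, exploiting that $W$ is a subgroup: for $x\in W$ one has $W-x=W$, hence $(Z_{t}\ast 1_{W})(x)=\int_{W}Z_{t}(x-y)\,d^{n}y=Z_{t}(W)$ and therefore $t^{-1}\big((Z_{t}\ast 1_{W})(x)-1_{W}(x)\big)=-t^{-1}Z_{t}(\complement W)$, a constant in $x\in W$. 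Letting $t\to0^{+}$ shows $A_{2}1_{W}\equiv-\kappa$ on $W$, where $\kappa:=\lim_{t\to0^{+}}t^{-1}Z_{t}(\complement W)$, so that $\beta(1_{W},g)=\kappa\int_{\mathbb{Q}_{p}^{n}}\overline{g(x)}\,d^{n}x$ and everything reduces to showing $\kappa=0$.

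Establishing $\kappa=0$ is, I expect, the main obstacle, because the naive ``constant function'' argument is unavailable ($c\cdot1$ is not in $L^{2}$) and $\kappa$ is precisely the $\complement W$-mass of the L\'evy measure of $(Z_{t})_{t>0}$, which is exactly what the local-type hypothesis must kill. I would argue as follows. Again by local type, $\operatorname{supp}A_{2}1_{W}\subseteq W$, so $A_{2}1_{W}=0$ on $\complement W$ and $\int_{B_{N}^{n}}A_{2}1_{W}\,d^{n}x=\int_{W}A_{2}1_{W}\,d^{n}x=-\kappa\,\mathrm{vol}(W)$ for every ball $B_{N}^{n}\supseteq W$. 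On the other hand, a Fubini computation using $B_{N}^{n}-y=B_{N}^{n}$ for $y\in W$ gives $\int_{B_{N}^{n}}\big(Z_{t}\ast1_{W}-1_{W}\big)\,d^{n}x=-Z_{t}(\complement B_{N}^{n})\,\mathrm{vol}(W)$; since the difference quotients converge to $A_{2}1_{W}$ in $L^{2}$ and $B_{N}^{n}$ is compact, the $t$-limit may be taken under $\int_{B_{N}^{n}}$, yielding $-\kappa\,\mathrm{vol}(W)=-\kappa_{N}\,\mathrm{vol}(W)$ with $\kappa_{N}:=\lim_{t\to0^{+}}t^{-1}Z_{t}(\complement B_{N}^{n})$. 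Thus $\kappa=\kappa_{N}$ for all such $N$. Finally $\kappa_{N}=\nu(\complement B_{N}^{n})$, where $\nu$ is the L\'evy measure of the convolution semigroup (finite on sets bounded away from $0$), and $\nu(\complement B_{N}^{n})\to\nu(\emptyset)=0$ as $N\to\infty$ by continuity from above; hence $\kappa=0$. The delicate points to be checked carefully are the $L^{2}$-justification of the interchange of limit and integral on the compact balls $B_{N}^{n}$ (where the tail bound $Z_{t}(w)\leq At\|w\|_{p}^{-d\beta-n}$ of Remark \ref{Characte_Z}$-(iii)$ also controls the contribution from $\complement W$), and the identification $\kappa_{N}=\nu(\complement B_{N}^{n})$ from Berg--Forst theory. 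Once $\kappa=0$ both pieces vanish and $\beta(f,g)=0$.
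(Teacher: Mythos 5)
Your argument has a genuine circularity in the context of this paper. The decomposition $f=c\,1_{W}+f_{0}$ over a compact open subgroup $W$ and the explicit computation of $A_{2}1_{W}$ on $W$ are both fine, but the two load-bearing steps --- discarding $\beta(f_{0},g)$ because $\operatorname{supp}A_{2}f_{0}\subseteq\operatorname{supp}f_{0}$ is disjoint from $\operatorname{supp}g$, and the identity $\int_{B_{N}^{n}}A_{2}1_{W}\,d^{n}x=\int_{W}A_{2}1_{W}\,d^{n}x$ --- both invoke the local-type property of $(Z_{t})_{t>0}$. That property is not a hypothesis of this lemma; it is the conclusion of Lemma \ref{local_type}, which the paper deduces \emph{from} the present lemma via \cite[Remark 13-(a)]{Forst} (the vanishing of $\beta$ on pairs with this support configuration is essentially Forst's criterion for locality). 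So you are assuming exactly what this lemma exists to supply. The paper's own proof uses no locality at all: it writes $\beta(g,h)=\left(|f(\xi)|_{p}^{\beta}\widehat{g},\widehat{h}\right)$, passes to the convolution $g\ast\overline{h}$ via the Parseval--Steklov identity and Fubini, and argues that the resulting $\xi$-integral vanishes because $|f(0)|_{p}^{\beta}=0$ while $\int_{\mathbb{Q}_{p}^{n}}\chi_{p}(\xi\cdot x)\,d^{n}x=0$ for $\xi\neq 0$ --- a purely Fourier-analytic computation.

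A second warning sign that the gap cannot simply be patched: if your chain of reasoning went through, it would establish $\lim_{t\rightarrow 0^{+}}t^{-1}Z_{t}(\complement W)=0$ for every ball $W$, i.e.\ condition (i) of Theorem \ref{equivalencias}, hence that the L\'evy measure $Z$ vanishes and that $|f(\xi)|_{p}^{\beta}=c+il(\xi)+q(\xi)$. The paper never asserts that these conditions hold --- Theorem \ref{equivalencias} only states their equivalence --- and such a conclusion sits very uneasily with Remark \ref{Characte_Z}-$(iii)$: these heat kernels have tails of order $t\|x\|_{p}^{-d\beta-n}$, which is precisely the signature of a nontrivial L\'evy measure and a pure-jump process. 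In short, the missing ingredient (locality) is not merely unavailable at this point in the paper's logical order; you should be skeptical that it can be supplied independently, and the proof of this lemma must therefore proceed by a direct computation as the paper attempts, not through the generator's support properties.
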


\begin{proof}
Suppose that $D\subset \mathbb{Q}_{p}^{n}$ is a neighbourhood of the support of $h$ where $g$ is equal a constant $k,$ i.e. there is an open subset $U\subset D$ such that $supp(h)\subset U$ and $g|_{D}=k$. Using Fubini's Theorem and the Parseval-Steklov equality, see \cite[Theorem 5.3.1]{Albeverio et al}, we have
\begin{eqnarray*}
\beta (g,h) &=&\left( |f(\xi )|_{p}^{\beta }\widehat{g},\widehat{h}\right)
=\int\nolimits_{
\mathbb{Q}
_{p}^{n}}|f(\xi )|_{p}^{\beta }\widehat{g}(\xi )\widehat{\overline{h}}(\xi
)d^{n}\xi \\
&=&\int\nolimits_{
\mathbb{Q}
_{p}^{n}}|f(\xi )|_{p}^{\beta }\left( \int\nolimits_{
\mathbb{Q}
_{p}^{n}}\chi _{p}(\xi ,x)(g\ast \overline{h})(x)d^{n}x\right) d^{n}\xi \\
&=&\left( k\int\nolimits_{D}\overline{h}(y)d^{n}y\right) \left[
\int\nolimits_{
\mathbb{Q}
_{p}^{n}}|f(\xi )|_{p}^{\beta }\left( \int\nolimits_{
\mathbb{Q}
_{p}^{n}}\chi _{p}(\xi ,x)d^{n}x\right) d^{n}\xi \right] .
\end{eqnarray*}%
For $\xi \in \mathbb{Q} _{p}^{n}$ we have: If $\xi =0$ then $|f(\xi )|_{p}^{\beta }=0$, and if $\xi
\in \mathbb{Q}_{p}^{n}\backslash \{0\},$ then by \cite[Example 9-p. 44]{V-V-Z} we have
that $\int\nolimits_{\mathbb{Q}_{p}^{n}}\chi _{p}(\xi ,x)d^{n}x=0.$

From the above we have that $\beta (g,h)=0.$
\end{proof}

\begin{lemma}
\label{local_type} The convolution semigroup $(Z_{t})_{t>0}$ is of local type.
\end{lemma}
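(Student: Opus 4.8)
The plan is to derive the three support conditions in the definition of local type from the locality of the sesquilinear form $\beta$ established in Lemma \ref{sesquilinear}. The natural setting is the $L^{2}$-generator $A_{2}$, for which both the identity $\beta(f,g)=(-A_{2}f,g)$ and the Fourier description of Lemma \ref{Lemma_Forst} are available; the conditions for $A_{0}$ and $A_{b}$ will then be reduced to this case through the common core of test functions.

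First I would establish condition (iii). Fix $f\in D_{2}$ and a point $x_{0}\notin supp\,f$, and pick an open neighbourhood $V$ of $x_{0}$ disjoint from $supp\,f$, so that $f\equiv 0$ on $V$. For every $\psi\in\mathcal{D}(\mathbb{Q}_{p}^{n})$ with $supp\,\psi\subseteq V$ (note $\mathcal{D}(\mathbb{Q}_{p}^{n})\subseteq\mathcal{K}\cap D_{2}$), the function $f$ is constant, equal to $0$, on the neighbourhood $V$ of $supp\,\psi$; hence Lemma \ref{sesquilinear} yields $\beta(f,\psi)=0$, that is $(-A_{2}f,\psi)=0$. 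Since this holds for all such $\psi$ and test functions supported in $V$ are dense in $L^{2}(V)$, we obtain $A_{2}f=0$ almost everywhere on $V$. As $x_{0}$ was an arbitrary point outside $supp\,f$, this gives $supp\,A_{2}f\subseteq supp\,f$.

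For conditions (i) and (ii) I would use that $(Z_{t})_{t>0}$ is a convolution semigroup of probability measures (Remark \ref{probability_measures}), so the same family acts consistently on $C_{0}(\mathbb{Q}_{p}^{n})$, on the bounded continuous functions, and on $L^{2}(\mathbb{Q}_{p}^{n})$, while the generators $A_{0}$, $A_{b}$, $A_{2}$ all restrict to the single operator $-f(\mathcal{\partial},\beta)$ on the common core $\mathcal{D}(\mathbb{Q}_{p}^{n})$. On this core the support inclusion has just been verified, and since each generator is closed with $\mathcal{D}(\mathbb{Q}_{p}^{n})$ as a core, one propagates the inclusion to all of $D_{0}$ and $D_{b}$ by approximation. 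Alternatively, one may invoke the equivalence of the three conditions for a convolution semigroup from \cite{Forst}, which reduces local type to the single locality statement of Lemma \ref{sesquilinear}.

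The main obstacle is exactly this transfer from $L^{2}$ to the $C_{0}$- and $C_{b}$-generators in (i) and (ii): the duality argument that makes (iii) transparent rests on the Parseval identity and the self-dual Hilbert-space pairing, neither of which is directly available for $A_{0}$ or $A_{b}$. Making the approximation rigorous requires knowing that $\mathcal{D}(\mathbb{Q}_{p}^{n})$ is a core for each generator and that the support inclusion survives the relevant limits, so the delicacy lies in this closability/core step rather than in any further computation.
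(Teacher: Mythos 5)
Your overall strategy --- feeding the locality statement of Lemma \ref{sesquilinear} into a criterion that upgrades it to the three support conditions --- is the right one, and your closing ``alternative'' is in fact exactly what the paper does: it combines Lemma \ref{sesquilinear} with Remark \ref{probability_measures} and simply cites \cite[Remark 13-(a)]{Forst}, which asserts that for a convolution semigroup of probability measures the vanishing of $\beta(g,h)$ for $g,h\in\mathcal{K}\cap D_{2}$ with $g$ constant near $\operatorname{supp}h$ is equivalent to the semigroup being of local type. The direct argument you sketch in place of that citation, however, has two genuine gaps. First, in your proof of condition (iii) you apply Lemma \ref{sesquilinear} with $g=f$ for an arbitrary $f\in D_{2}$; but that lemma is stated (and its proof, which manipulates $g\ast\overline{h}$ pointwise, only works) for $g\in\mathcal{K}\cap D_{2}$, i.e.\ for \emph{continuous, compactly supported} $g$. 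A general element of $D_{2}$ is merely an $L^{2}$ function with $|f(\xi)|_{p}^{\beta}\widehat{f}\in L^{2}$, so the step ``$\beta(f,\psi)=0$'' is not licensed; you would first have to approximate $f$ in the graph norm of $A_{2}$ by elements of $\mathcal{K}\cap D_{2}$ whose supports do not spill outside $\operatorname{supp}f$, and that approximation is precisely the nontrivial content of Forst's equivalence.

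Second, the passage to conditions (i) and (ii) is asserted rather than proved. That $\mathcal{D}(\mathbb{Q}_{p}^{n})$ is a core for $A_{0}$ and $A_{b}$ is plausible but is established nowhere in the paper, and even granting it, a support inclusion valid on a core does not automatically survive graph-norm limits: if $f_{n}\to f$ and $A_{0}f_{n}\to A_{0}f$, one only gets $\operatorname{supp}A_{0}f\subseteq\overline{\bigcup_{n\geq N}\operatorname{supp}f_{n}}$ for each $N$, which need not be contained in $\operatorname{supp}f$ unless the approximating sequence is built with controlled supports. You correctly flag this closability/core step as ``the delicacy,'' but flagging it is not closing it. The clean repair is to drop the direct computation and rest the whole lemma on Lemma \ref{sesquilinear}, Remark \ref{probability_measures} and \cite[Remark 13-(a)]{Forst}, which is exactly the paper's one-line proof.
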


\begin{proof}
The result is followed by Lemma \ref{sesquilinear} and Remark \ref{probability_measures}, taking into account \cite[Remark 13-(a)]{Forst}.
\end{proof}

By \cite[Proposition 9]{Forst},\ \cite[Definition 18.24]{Berg-Gunnar} and
\cite[Exercise 18.26]{Berg-Gunnar}, the whole theory of convolution
semigroups of the local type can be adapted to the theory of the text \cite[Chapter III-\S\ 18]{Berg-Gunnar}.

\begin{lemma}
\label{Existencia_Levy}\cite[Proposition 18.2]{Berg-Gunnar} The net $\left(\frac{1}{t}Z_{t}|_{\mathbb{Q}_{p}^{n}\backslash \{0\}}\right) _{t>0}$ converges vaguely as $t\rightarrow 0^{+}$ to a measure $Z$ on $\mathbb{Q}_{p}^{n}\backslash \{0\}.$
\end{lemma}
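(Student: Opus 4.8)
The plan is to establish the vague convergence by testing against the locally constant, compactly supported functions $\varphi\in\mathcal{D}(\mathbb{Q}_{p}^{n}\backslash\{0\})$ and then upgrading to all of $C_{c}(\mathbb{Q}_{p}^{n}\backslash\{0\})$ by a density argument. For each $t>0$ I would introduce the linear functional $\Lambda_{t}(\varphi):=\frac{1}{t}\int_{\mathbb{Q}_{p}^{n}}\varphi(x)Z_{t}(x)\,d^{n}x$, which is positive on nonnegative $\varphi$ because $Z_{t}\geq0$ by Lemma \ref{properties_Z(x,t)}-$(iii)$. Proving the lemma then amounts to showing that $\Lambda_{t}(\varphi)$ has a limit as $t\rightarrow0^{+}$ for every such $\varphi$ and that the limiting functional is represented by a positive Radon measure $Z$ on $\mathbb{Q}_{p}^{n}\backslash\{0\}$.

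First I would record the uniform control coming from Remark \ref{Characte_Z}-$(iii)$: since $Z(x,t)\leq At\|x\|_{p}^{-d\beta-n}$, we get $\frac{1}{t}Z_{t}(x)\leq A\|x\|_{p}^{-d\beta-n}$ for all $x\neq0$ and all $t>0$, a bound independent of $t$. Consequently, for $\varphi$ supported in a compact set $K\subset\mathbb{Q}_{p}^{n}\backslash\{0\}$ one has $|\Lambda_{t}(\varphi)|\leq C_{K}\|\varphi\|_{L^{\infty}}$ with $C_{K}:=A\int_{K}\|x\|_{p}^{-d\beta-n}\,d^{n}x<\infty$, uniformly in $t$. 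This equicontinuity guarantees, via Banach--Alaoglu and Riesz representation, that the net $\left(\frac{1}{t}Z_{t}|_{\mathbb{Q}_{p}^{n}\backslash\{0\}}\right)_{t>0}$ has vague limit points, each a positive Radon measure, and it will let me pass from $\mathcal{D}$ to $C_{c}$ once convergence on $\mathcal{D}$ is settled; so the remaining task is to identify a single limit.

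To identify the limit I would move to the Fourier side. Writing $\psi(\xi):=|f(\xi)|_{p}^{\beta}$, which is even since $|-1|_{p}=1$, a direct application of Fubini's theorem gives $\int\varphi\,Z_{t}=\int\widehat{\varphi}(\xi)e^{-t\psi(\xi)}\,d^{n}\xi$. The decisive observation is that $0\notin\mathrm{supp}\,\varphi$ forces $\int\widehat{\varphi}(\xi)\,d^{n}\xi=\varphi(0)=0$ by the inversion formula (\ref{FF(f)}), so the constant term drops out and $\Lambda_{t}(\varphi)=\int\widehat{\varphi}(\xi)\,\frac{e^{-t\psi(\xi)}-1}{t}\,d^{n}\xi$. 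Since $0\leq\frac{1-e^{-t\psi}}{t}\leq\psi$ and $\widehat{\varphi}\in\mathcal{D}$ has compact support while $\psi$ is continuous, the integrand is dominated by the integrable function $|\widehat{\varphi}|\,\psi$; dominated convergence then yields $\lim_{t\rightarrow0^{+}}\Lambda_{t}(\varphi)=-\int\widehat{\varphi}(\xi)|f(\xi)|_{p}^{\beta}\,d^{n}\xi$. This limit exists and agrees along every vaguely convergent subnet, so by the equicontinuity above the whole net converges, first on $\mathcal{D}(\mathbb{Q}_{p}^{n}\backslash\{0\})$ and then, by a standard $3\varepsilon$ approximation of $g\in C_{c}(\mathbb{Q}_{p}^{n}\backslash\{0\})$ by locally constant functions with support in a common compact $K$, on all of $C_{c}(\mathbb{Q}_{p}^{n}\backslash\{0\})$. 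The limit functional is positive, hence by Riesz representation it is given by a positive Radon measure $Z$ on $\mathbb{Q}_{p}^{n}\backslash\{0\}$, the asserted L\'{e}vy measure.

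The main obstacle is purely technical: justifying the interchange of integration (Fubini and the multiplication formula) and the dominated-convergence passage rigorously, together with the $\mathcal{D}$-to-$C_{c}$ upgrade through the uniform bound. Everything is powered by the single pointwise estimate $Z(x,t)\leq At\|x\|_{p}^{-d\beta-n}$, which simultaneously supplies the equicontinuity and makes the candidate L\'{e}vy density $-\mathcal{F}^{-1}(\psi)$ locally integrable away from the origin. I note that, since $(Z_{t})_{t>0}$ has been shown to be a convolution semigroup of local type in Lemma \ref{local_type}, one may alternatively invoke \cite[Proposition 18.2]{Berg-Gunnar} verbatim; the computation above has the advantage of exhibiting $Z$ explicitly as the vague limit with density equal to the regular part of $-\mathcal{F}^{-1}(|f|_{p}^{\beta})$ on $\mathbb{Q}_{p}^{n}\backslash\{0\}$.
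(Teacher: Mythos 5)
Your argument is correct, but it takes a genuinely different route from the paper: the paper offers no proof at all for this lemma, simply invoking \cite[Proposition 18.2]{Berg-Gunnar} for the general convolution semigroup $(Z_{t})_{t>0}$ on the locally compact abelian group $\mathbb{Q}_{p}^{n}$ (whose existence as a convolution semigroup was established in the proof of Theorem \ref{negative_function}), after noting that the theory of \cite[Chapter III, \S 18]{Berg-Gunnar} applies. Your proof is a self-contained verification in this specific setting: the identity $\int\varphi Z_{t}=\int\widehat{\varphi}\,e^{-t|f|_{p}^{\beta}}$, the vanishing of $\int\widehat{\varphi}=\varphi(0)$ for test functions supported away from the origin, the bound $0\leq\frac{1-e^{-t\psi}}{t}\leq\psi$ with dominated convergence, and the uniform estimate $\frac{1}{t}Z_{t}(x)\leq A\|x\|_{p}^{-d\beta-n}$ from Remark \ref{Characte_Z}-$(iii)$ to pass from $\mathcal{D}(\mathbb{Q}_{p}^{n}\backslash\{0\})$ to $C_{c}(\mathbb{Q}_{p}^{n}\backslash\{0\})$ are all sound, and the density of locally constant compactly supported functions in $C_{c}$ of an open subset of $\mathbb{Q}_{p}^{n}$ is standard. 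What your approach buys is an explicit identification of the L\'{e}vy measure as the restriction of the distribution $-\mathcal{F}(|f|_{p}^{\beta})$ to $\mathbb{Q}_{p}^{n}\backslash\{0\}$, which the abstract citation does not provide; what the paper's route buys is brevity and the guarantee that the limit object is exactly the L\'{e}vy measure in the sense of \cite[Definition 18.3]{Berg-Gunnar}, which is needed verbatim for Theorem \ref{equivalencias}. Two minor remarks: the Banach--Alaoglu digression is not needed once you run the $3\varepsilon$ Cauchy argument, and the local-type property of Lemma \ref{local_type} is not required for this lemma (only the convolution semigroup structure is), though it is needed later for the equivalences.
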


\begin{definition}
\label{Def_Levy_measure}\cite[Definition 18.3]{Berg-Gunnar} The positive measure $Z$ on $\mathbb{Q}_{p}^{n}\backslash \{0\}$ from Lemma \ref{Existencia_Levy} is called the L\'{e}vy measure for the convolution semigroup $(Z_{t})_{t>0}$ on $\mathbb{Q}_{p}^{n}$ (and also the L\'{e}vy measure for the continuous negative function $|f|_{p}^{\beta }$ on $\mathbb{Q}_{p}^{n}$).
\end{definition}

\begin{definition}
\cite[Definition 7.18]{Berg-Gunnar} A function $q: \mathbb{Q}_{p}^{n}\rightarrow \mathbb{R}$ is called a quadratic form, if it satisfies the equation
\begin{equation*}
2q(x)+2q(\xi )=q(x+\xi )+q(x-\xi )\text{ \ for all }x,\xi \in \mathbb{Q}_{p}^{n}.
\end{equation*}
\end{definition}

\begin{remark}
\label{Obs_quadratic} It is easy to see that a quadratic form $q$ satisfies
\begin{eqnarray*}
q(0) &=&0, \\
q(x) &=&q(-x)\text{ for all }x\in 
\mathbb{Q}
_{p}^{n}, \\
q(nx) &=&n^{2}q(x)\text{ \ for all }x\in 
\mathbb{Q}
_{p}^{n}\text{ and }n\in 
\mathbb{N}.
\end{eqnarray*}%
Moreover, by \cite[Proposition 7.19]{Berg-Gunnar} we have that a
non-negative quadratic form $q$ on $\mathbb{Q}_{p}^{n}$ is negative definite.
\end{remark}

As a direct consequence of \cite[Theorem 18.27]{Berg-Gunnar},\ Lemma \ref{local_type}, Lemma \ref{Existencia_Levy} and Definition \ref{Def_Levy_measure} we have the following theorem.

\begin{theorem}
\label{equivalencias}The following conditions are equivalent:
\begin{enumerate}[(i)]
\item for all open neighbourhoods $W$ of $0$ we have
\begin{equation*}
\lim_{t\rightarrow 0^{+}}\frac{1}{t}Z_{t}(\complement W)=0.
\end{equation*}
\item $Z=0.$
\item $|f(\xi )|_{p}^{\beta }=c+il(\xi )+q(\xi )$ for $\xi \in 
\mathbb{Q}_{p}^{n}$, where $c\geq 0,$ $l: \mathbb{Q}_{p}^{n}\rightarrow \mathbb{R}$ is a continuous homomorphism and $q: \mathbb{Q}_{p}^{n}\rightarrow \mathbb{R}$ is a non-negative, continuous quadratic form.
\end{enumerate}
\end{theorem}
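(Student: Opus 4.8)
The plan is to deduce the three-way equivalence from the structure theory of convolution semigroups of local type. By Lemma \ref{local_type} the semigroup $(Z_{t})_{t>0}$ is of local type, its associated continuous negative definite function is $\psi:=|f|_{p}^{\beta}$, and its L\'{e}vy measure is the vague limit $Z$ furnished by Lemma \ref{Existencia_Levy} and Definition \ref{Def_Levy_measure}. These two facts are exactly the hypotheses under which the canonical (L\'{e}vy--Khinchine) theory of \cite[Chapter III-\S 18]{Berg-Gunnar} applies, so I would organise the argument as (i) $\Leftrightarrow$ (ii) by hand, and (ii) $\Leftrightarrow$ (iii) by importing the canonical decomposition, ultimately appealing to \cite[Theorem 18.27]{Berg-Gunnar}.

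For (i) $\Leftrightarrow$ (ii) I would exploit that $\mathbb{Q}_{p}^{n}$ is totally disconnected, so every neighbourhood of $0$ contains a clopen ball $W=B_{r}^{n}$ with $\complement W$ again clopen. Since $Z$ is the vague limit of $(\tfrac{1}{t}Z_{t})|_{\mathbb{Q}_{p}^{n}\setminus\{0\}}$, for each $N$ the set $A:=\complement W\cap B_{N}^{n}$ is compact, clopen, and avoids $0$, so $1_{A}\in C_{c}(\mathbb{Q}_{p}^{n}\setminus\{0\})$ and $Z(A)=\lim_{t\to 0^{+}}\tfrac{1}{t}Z_{t}(A)\le\liminf_{t\to 0^{+}}\tfrac{1}{t}Z_{t}(\complement W)$. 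Under (i) the right-hand side is $0$; letting $N\to\infty$ gives $Z(\complement W)=0$, and since the sets $\complement W$ exhaust $\mathbb{Q}_{p}^{n}\setminus\{0\}$ as $W$ shrinks, $Z=0$. The converse (ii) $\Rightarrow$ (i) is the delicate half, because vague convergence to the zero measure does not by itself control the total mass $\tfrac{1}{t}Z_{t}(\complement W)$; here I would not reprove the estimate but lean on the local-type hypothesis through \cite[Theorem 18.27]{Berg-Gunnar}, which supplies precisely the uniform control needed to upgrade $Z=0$ to the limit in (i).

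For (ii) $\Leftrightarrow$ (iii) the tool is the canonical decomposition of the continuous negative definite function attached to a local-type convolution semigroup, namely $\psi(\xi)=c+il(\xi)+q(\xi)+(\text{L\'{e}vy part governed by }Z)$, with $c\ge 0$, $l$ a continuous homomorphism and $q$ a non-negative continuous quadratic form (which is itself negative definite, Remark \ref{Obs_quadratic}). The L\'{e}vy part vanishes identically exactly when $Z=0$, so the uniqueness of this decomposition yields both directions simultaneously: $Z=0$ collapses $\psi$ to the primitive form $c+il(\xi)+q(\xi)$, and conversely if $\psi$ already has this form then the integral term, hence $Z$, must vanish. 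I would keep the statement in the general form of \cite[Theorem 18.27]{Berg-Gunnar}, even though on $\mathbb{Q}_{p}^{n}$ the compactness of $\mathbb{Z}_{p}^{n}$ heavily constrains the admissible $l$ and $q$.

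The genuine work therefore sits in the preparatory lemmas rather than in the equivalence, which becomes formal once \cite[Theorem 18.27]{Berg-Gunnar} is in force; a self-contained proof would instead have to reconstruct the L\'{e}vy--Khinchine representation on the locally compact abelian group $\mathbb{Q}_{p}^{n}$, and the hard step there is the (ii) $\Rightarrow$ (i) passage from vague to mass-uniform concentration at $0$. As a sanity check that the equivalent conditions are genuinely restrictive, I note that the heat-kernel bound $Z(x,t)\le At\|x\|_{p}^{-d\beta-n}$ of Remark \ref{Characte_Z}-(iii) only yields $\tfrac{1}{t}Z_{t}(\complement W)\le A\int_{\complement W}\|x\|_{p}^{-d\beta-n}\,d^{n}x$, a strictly positive constant in general; thus (i) typically fails and $Z$ is typically nonzero, consistent with these operators generating jump processes, so the theorem is best read as a characterisation of the purely diffusive case.
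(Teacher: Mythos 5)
Your proposal takes essentially the same route as the paper: establish that $(Z_{t})_{t>0}$ is a convolution semigroup of local type with L\'{e}vy measure $Z$ (Lemma \ref{local_type}, Lemma \ref{Existencia_Levy}, Definition \ref{Def_Levy_measure}) and then read the three-way equivalence off \cite[Theorem 18.27]{Berg-Gunnar}, which is exactly the paper's one-line proof; your extra direct argument for (i) $\Rightarrow$ (ii) is a correct supplement rather than a departure. Your closing observation that on $\mathbb{Q}_{p}^{n}$ the equivalent conditions are generically violated is a pertinent point the paper does not address.
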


\end{document}